\let\N\naturals
\let\R\reals
\newcommand{\defeq}{\mathrel{\vcenter{\hbox{\scriptsize:}}\!=}} 
  \patchcmd{\@endtheorem}{\@endpefalse }{}{}{} 
  \patchcmd{\endproof}{\@endpefalse}{}{}{}
\newtheorem{paber@thm}{}[section]
  \theoremstyle{plain}
    \newtheorem{theorem}[paber@thm]{Theorem}
    \newtheorem*{theorem*}{Theorem}
    \newtheorem*{thm*}{Theorem}
    \newtheorem*{prop*}{Proposition}
    \newtheorem{proposition}[paber@thm]{Proposition}
    \newtheorem*{proposition*}{Proposition}
    \newtheorem{corollary}[paber@thm]{Corollary}
    \newtheorem{lemma}[paber@thm]{Lemma}
    \newtheorem*{lemma*}{Lemma}
  \theoremstyle{definition}
    \newtheorem{definition}[paber@thm]{Definition}
    \newtheorem*{definition*}{Definition}
    \newtheorem*{defn*}{Definition}
  \theoremstyle{remark}
    \newtheorem*{rmk*}{Remark}
    \newtheorem{remark}[paber@thm]{Remark}
    \newtheorem*{remark*}{Remark}
    \newtheorem{example}[paber@thm]{Example}
    \newtheorem*{example*}{Example}
\newcommand\oracle{\mathcal{O}}
\newcommand\market{\mathcal{M}}
\newcommand\action{\mathcal{S}}
\newcommand\labels{\mathcal{K}}
\newcommand\Expectation{\mathbb{E}}
\newcommand\Probability{\mathbb{P}}
\newcommand\Utility{\mathbf{U}}
\newcommand\beliefs{\mathcal{R}}
\newcommand\sell{\mathtt{sell}}
\newcommand\arb{\mathtt{arb}}
\title{Do backrun auctions protect traders?}
\author{Andrew W. Macpherson \thanks{This work was funded by the Flashbots Research grants programme as FRP-34. GitHub: \url{https://github.com/flashbots/mev-research/}}}
\begin{document}
\maketitle

\begin{abstract}

  We study a new `laminated' queueing model for orders on batched trading venues such as decentralised exchanges.
  The model aims to capture and generalise transaction queueing infrastructure that has arisen to organise MEV activity on public blockchains such as Ethereum, providing convenient channels for sophisticated agents to extract value by `acting on' end-user order flow by performing arbitrage and related HFT activities.
  In our model, market orders are interspersed with orders created by arbitrageurs that under idealised conditions reset the marginal price to a global equilibrium between each trade, improving predictability of execution for liquidity traders.
  
  If an arbitrageur has a chance to land multiple opportunities in a row, he may attempt to manipulate the execution price of the intervening market order by a probabilistic `blind sandwiching' strategy.
  To study how bad this manipulation can get, we introduce and bound a \emph{price manipulation coefficient} that measures the deviation from global equilibrium of local pricing quoted by a rational arbitrageur.
  We exhibit cases in which this coefficient is well approximated by a `zeta value' with interpretable and empirically measurable parameters.

\end{abstract}

\vspace{3ex}

\begin{quote}

  \emph{Market orders and stop orders face a common risk: those who submit market orders, or whose stop orders convert to market orders, anticipate that there will be robust and orderly quoting and trading activity to provide an immediate execution at a reasonable price.}
  
  \hfill SEC Memorandum, \cite{sec2016certain}.

\end{quote}

\setcounter{tocdepth}{1}
\tableofcontents

\section{Introduction}\label{introduction}

Arbitrage is the means by which markets arrive at competitive equilibrium \cite{nau1991arbitrage}.
In the case that markets are fragmented across many trading venues, arbitrageurs communicate local prices between different venues and facilitate the convergence of the market on a global equilibrium, alleviating local information asymmetries.
A robust and orderly functioning of arbitrage is a basic assumption for much of financial economics \cite{modigliani1958cost,werner1987arbitrage,varian1987arbitrage}.

Arbitrage opportunities in liquid markets are short-lived, so speed and fine-grained control over execution ordering is of the essence for arbitrage traders.
Naturally, in the information age arbitrage is primarly the domain of high-frequency algorithmic traders \cite[Part II]{yadav2015algorithmic}.

The act of competing for arbitrage opportunities at high frequency has been argued to have some negative externalities \cite{yadav2015algorithmic,budish2015high,daian2020flash}, notably:
\begin{itemize}
  \item 
    A winner-takes-all latency race incentivizes massive investment in physical infrastructure and low latency algorithm design and implementation.
    Because this infrastructure and R\&D labour is private and highly specialised, much of it is wastefully duplicated.
    
  \item 
    The enormous barrier to entry for new agents erected by this investment favours a monopolistic market structure.

  \item
    The extraordinary efficiency with which HFT algorithms can react to signals can amplify the effect on markets of modelling errors or sudden changes in liquidity structure, e.g.~flash crashes \cite[1628]{yadav2015algorithmic}.

  \end{itemize}

What if the competition for arbitrage opportunities could itself be made more orderly?
The advent of programmable, transparent trading environments on public blockchains such as Ethereum provides a setting for radical experimentation in this direction.
Moreover, the extreme proliferation of novel assets and trading venues on these domains makes the need for efficient arbitrage even more immediate.

\subsection{Bringing order to the arbitrage market}

%
The discrete time environment of programmable blockchains provides a new type of opportunity for arbitrageurs: the possibility to \emph{backrun} a price-moving trade by having the arbitrage transaction sequenced in the very next position in the block.
This differs from the `continuous' execution environment of traditional exchanges in that no other transaction can intervene between the target trade and the backrun.
%
%
This type of backrun is commonly employed to arbitrage CFMM DEXes \cite{wang2022cyclic}.

The general model for how arbitrageurs carry out this procedure is as follows: unconfirmed transactions from `ordinary' end users gather in a pool --- public or otherwise --- observed by algorithmic traders, or \emph{MEV searchers}.
Information about the contents of these transactions may be completely or selectively revealed to searchers, who then construct arbitrage transactions and attempt to arrange for them to be sequenced so as to best exploit the opportunities created by the incoming order flow.

In Ethereum's public mempool, the contents of pending transactions are fully visible to arbitrageurs, who must compete to be allocated each backrun position through complex and subtle bidding and timing games \cite{daian2020flash}.
Moreover, if arbitrageurs themselves submit transactions to the public mempool, they too can become the victim of targeted exploitation by other searchers.
Like traditional HFT, this incurs substantial costs and barriers to entry on the part of searchers; furthermore, the congestion caused by excessive messaging and failed transactions is an additional cost borne by Ethereum itself.

Given the vital function performed by arbitrageurs in aligning prices across different trading venues and the problems associated with the ad hoc public mempool competition, it is not surprising that the industry has begun to explore designs for preferred channels along which arbitrageurs can compete for and act on backrun opportunities in a more orderly fashion.\footnote{\emph{MEV Blocker}, \url{https://mevblocker.io}}\footnote{\emph{MEV-Share}, \url{https://collective.flashbots.net/t/mev-share-programmably-private-orderflow-to-share-mev-with-users/1264}}\footnote{SUClave, \url{https://github.com/getclave/suclave-ethglobal-istanbul}}
Such channels allow the competition for positioning to happen out of band, so that Ethereum need process \emph{only one} searcher transaction (or string of transactions originating from the same entity) after each user transaction, in recognition of the fact that generally only one arbitrage transaction is needed to reset the price after each price-moving trade.

\paragraph{Laminated batches}
We now attempt to formalise the alternation of user and searcher arbitrage orders that arise from the type of privileged searcher channel just described.
Consider an idealised trading venue $\market$ on which two classes of trader place orders to exchange a risky asset with a num\'eraire:
\begin{itemize}
  \item \emph{liquidity traders}, or \emph{price takers}, who are opinionated about the size (denominated in the risky asset) they wish to trade, but although they surely wish to get the best possible price given their size and time requirements, are not opinionated about the exact value of that price;
  \item \emph{arbitrageurs}, who are opinionated about the marginal price at which they are prepared to trade, but not size.
\end{itemize}
Liquidity and arbitrage orders are collected in two separate queues and alternately executed, as follows:
\begin{enumerate}
  \item 
    First, a set of liquidity orders $\sell(r_i)$ of (signed) sizes $\{r_1,\ldots,r_K\}$ are enqueued.
    When executed, the resulting trades will have some price impact that depends on $r_i$ and the market liquidity curve immediately before execution.
  
  \item
    A sequence of $K$ or $K+1$ execution slots are then made available for arbitrageurs to place orders: one immediately after each liquidity order and preceding the next (if any) and, optionally, one at the front of the batch.
    Depending on how we structure the discipline, arbitrageurs may have varying degrees of information about their slot allocation and the number, contents, and ordering of liquidity orders when they make their decisions.

    An arbitrage order $\arb(p)$ is specified by declaring the target price $p$, indicating that its originator commits to buy/sell all liquidity available at a price better than $p$.\footnote{In the presence of transaction fees, arbitrageurs would quote separate bid and ask prices, introducing a spread.}
    Executing an arbitrage order with target price $p$ resets the marginal price on to $\market$ $p$.

\end{enumerate}
Orders are then executed in sequence
\[
 [\quad \arb(p_0),\quad \sell(r_1),\quad \arb(p_1),\quad \ldots,\quad \sell(r_K),\quad \arb(p_k) \quad]
\]
alternating between liquidity and arbitrage orders.
We refer to this process of interleaving orders as \emph{lamination}, and to the resulting ordered segment as a \emph{laminated batch}.
For an illustration of how this model can be specialised to model a two-lane queueing system actually deployed in the wild, see Example \ref{mev-blocker}.

\begin{remark}[Implementations of laminated batches]

  The semantics of arbitrageur orders in our model is equivalent to that of a large partial fill limit order.
  For trading on any CFMM DEX, this order format can be programmed into a wrapper contract that computes the correct sizing in terms of the reserve balances at execution time.
  
  In practice, an arbtrageur/taker lamination queue can be implemented by providing two endpoints to submit orders, one for liquidity traders and one for arbitrageurs, together with a mechanism to assign arbitrage slots --- say, an auction, or a stake-weighted random election.
  These endpoints could either be integrated into the venue itself, say, as a monolithic smart contract ensemble, or as an additional infrastructure layer over an existing `vanilla' CFMM that enqueues liquidity and arbitrage orders in two lanes and delivers them in an atomic, contiguous bundle.
  For our analysis to apply in the latter case, orders that arrive at the CFMM other than via the lamination layer must be considered as beyond the strategy horizon of our arbitrageurs.
  
\end{remark}

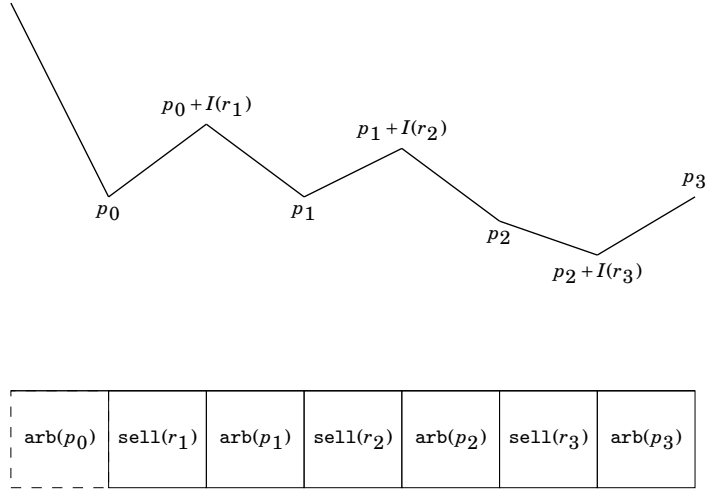
\begin{figure}[!ht]
  \centering
  \vspace{3ex}
  \resizebox{0.6\textwidth}{!}{%
  \begin{circuitikz}[label distance=-1ex]
  \tikzstyle{every node}=[font=\tiny]
  \draw [, line width=0.2pt , dashed] (-4,-1) rectangle node {$\arb(p_0)$} (-3,0);
  \draw [, line width=0.2pt ]         (-3,-1) rectangle node {$\sell(r_1)$} (-2,0);
  \draw [, line width=0.2pt ]         (-2,-1) rectangle node {$\arb(p_1)$} (-1,0);
  \draw [, line width=0.2pt ]         (-1,-1) rectangle node {$\sell(r_2)$} (0, 0);
  \draw [, line width=0.2pt ]         (0, -1) rectangle node {$\arb(p_2)$} (1, 0);
  \draw [, line width=0.2pt ]         (1, -1) rectangle node {$\sell(r_3)$} (2, 0);
  \draw [, line width=0.2pt ]         (2, -1) rectangle node {$\arb(p_3)$} (3, 0);

  \node (2) at   (-4, 0) {};
  \node (3) at   (3, 0) {};
  \node (4) at   (-4, 4) {};
  \node (5) at   (-3, 2) {};
  \node  (6) at  (-2, 2.75) {};
  \node  (7) at  (-1, 2) {};
  \node [label=above:$p_1+I(r_2)$]  (8) at  (0, 2.5) {};
  \node  (9) at  (1, 1.75) {};
  \node [label=below:$p_2+I(r_3)$] (10) at (2, 1.4) {};
  \node  (11) at (3, 2) {};
  \node  (12) at (-4, 4) {};
  \node [label=below:$p_0$] (13) at (-3, 2) {};
  \node  (14) at (-2, 2.75) {};
  \node [label=below:$p_1$] (15) at (-1, 2) {};
  \node [label=below:$p_2$] (17) at (1, 1.75) {};
  \node [label=above:$p_3$] (18) at (3, 2) {};
  \node  (19) at (-2, 2.75) {};
  \node  (20) at (-2, 2.75) {};
  \node [label=above:$p_0+I(r_1)$] (21) at (-2, 2.75) {};
  \draw (2.center) to (3.center);
  \draw (4.center) to (5.center);
  \draw (5.center) to (6.center);
  \draw (6.center) to (7.center);
  \draw (7.center) to (8.center);
  \draw (8.center) to (9.center);
  \draw (9.center) to (10.center);
  \draw (10.center) to (11.center);
  \end{circuitikz}
  }%
  
  \vspace{2ex}
  \label{fig:batch}
  \caption{Price evolution in a laminated batch. $I(r_i)\defeq$\;price impact of trade $\sell(r_i)$.}
\end{figure}

To analyse our batch trading model, we make a few simplifying assumptions:
\begin{itemize}
  \item All markets are frictionless, that is, without transaction fees.
  \item Arbitrageurs can trade instantly on a reference market $\oracle$ with zero price impact.
  \item The target venue $\market$ as a whole is a price taker from $\oracle$. That is, $\market$ is sufficiently small compared to $\oracle$ that order flow on $\market$ does not influence the reference price $p_\oracle\leftarrow\oracle$.
  \item Batches are committed instantaneously from the perspective of the reference market, so the global equilbrium price $p_\oracle$ during the execution of a single batch is constant.
\end{itemize}
Given these assumptions, how would a rational arbitrageur set prices?

\paragraph{Passthrough arbitrage}
Suppose that each arbitrageur is \emph{myopic}, so that they treat each arbitrage opportunity in complete isolation --- that is, they have a strategy time horizon of just one trade.
This models an idealised situation in which arbitrage is so competitive that each agent treats as negligible their chances of being able to land multiple spots.
Then an arbitrageur's best strategy is always to set their target price to the reference price $p_\oracle$, absorbing all liquidity available on $\market$ at a better price, and simultaneously trading an equal amount in the opposite direction on $\oracle$.\footnote{In practice, arbitrageurs may also have more abstract methods to acquire opposite exposure to their trade on $\market$.}
We call this strategy \emph{passthrough pricing}, because it simply passes through information from the reference market to $\market$ without introducing a bias.

With passthrough arbitrage, each liquidity order on the verge of execution finds the market in a state of global price equilibrium.
There are a number of reasons to find this dynamic desirable:
\begin{itemize}    
  \item \emph{Ordering independent.}
    It reduces the problem of predicting executions to that of modelling the block price, which is indexed over a single discrete time variable.
  \item \emph{Sandwich-resistant.}
    Liquidity traders can place market orders with greater confidence that they will not receive an `manipulated' execution price.
  \item \emph{Price oracle.}
    Prices set by arbitrageurs provide an incentive-compatible internal price oracle.
  \item \emph{Mitigating wasteful competition.}
    Constraining the strategy space (say, by eliminating sandwiches) and simplifying execution of arbitrage MEV lowers the barrier to entry for would-be arbitrageurs and reduces the scope for model failures.

\end{itemize}
Passthrough pricing also has the effect of isolating adverse selection effects on liquidity providers, known as LVR \cite{milionis2022automated}, to the top of the block, which may facilitate redistribution mechanisms aimed at mitigating these effects.

Unfortunately, these results depend on the clearly unrealistic assumption that arbitrageurs make decisions about each trade in isolation.
What happens when this assumption is lifted?

\paragraph{Manipulation}
If an arbitrageur $i$ knows that he has a non-negligible chance to be allocated two or more slots --- say, slot $k-1$ and $k$ --- in a row, then under suitable conditions he may choose to use this opportunity to try to `sandwich' the intervening liquidity order $r_k$ so that the latter executes at a manipulated price, increasing the overall profit.
This strategy is harder to execute in the absence of execution guarantees about atomicity and ordering and without information about the contents of the target liquidity order.
However, with suitable models for the distribution of these unknowns, the strategy can still be dominant in expectation.
That is, trading in laminated batches is \emph{not} fully sandwich resistant.

The main goal of this paper is to study how bad this manipulation can get in the presence of an incompletely decentralized pool of rational arbitrageurs.
We can measure this by introducing a \emph{manipulation coefficient} $C_\market\defeq \log(\hat p/p_\oracle)\in\mathbb{R}$, where $\hat p$ is the price set by an arbitrageur at Nash equilibrium and $p_\oracle$ is the reference price.
Obtaining bounds or concentration results on $C_\market$ should be useful for exchange designers, who might wish to convince traders that manipulation on their venue will remain within certain limits, given some ansatz about the liquidity structure, the distribution of liquidity orders $r_i$, and that of the allocation function of opportunities to arbitrageurs.

\begin{remark}[Sybil resistance]

  It may be tempting to imagine that we can simply enforce `myopic' arbitrage by a hardcoded trading rule that the same arbitrageur cannot be allocated two consecutive slots.
  However, it is generally very difficult to rule out collaboration or Sybil identities that allow `many' arbitrageurs to act as one, so such rules would carry little water in practice.

\end{remark}

\subsection{Results}

Our main contributions are to formalise the model sketched in the previous section, establish limiting cases --- in particular, when the probability of landing a sandwich vanishes --- in which arbitrageurs provide passthrough pricing, and establish a closed form approximation to price manipulation near the limit.

Suppose given a market $\market$, a random order flow $(r_k)_{k=1}^K$, a population $\{1,\ldots,N\}$ of rational arbitrageurs, and a random slot allocation $\alpha:[K]_+\defeq\{0,\ldots,K\}\rightarrow[N]$.
Each arbitrageur $i$ quotes a target price $p_i$ with associated price $\phi(s_i)$ that will be the target price in slot $k$ for all $k\in\alpha^{-1}(i)$.\footnote{The case where arbitrageurs may set different prices for different slots will be treated shortly.}

Although our model permits the use of an arbitrary random allocation $\alpha$, our \emph{results} turn out to only depend on $\alpha$ through its single and binary marginal distributions.
That is, if we define
\begin{itemize}
  \item the \emph{primary weight} $a_{i,k} \defeq \Probability[\alpha(k)=i]$, the probability that player $i$ is assigned slot $k$;
  \item a \emph{secondary weight} or \emph{coupling} $b_{i,k} \defeq \Probability[\alpha(k-1)=i \mid \alpha(k)=i]$, the probability that, given player $i$ is assigned slot $k$, they are also assigned the slot immediately before (hence giving an opportunity to sandwich).
\end{itemize}

Our first result is that under general conditions, best response strategies are dominant, so the optimal target price for each arbitrageur can be computed separately.

\begin{theorem}[Existence and convergence of equilibrium prices, \ref{theorem-existence}]

  The marginal quote $\hat{p}$ at each slot depends only on $p_\oracle$, the marginal distributions of the trade size and direction for each $k$, and the primary and secondary allocation weights $a_{i,k},b_{i,k}$.
  
  Moreover, $\hat{p}$ converges to $p_\oracle$ as either
  \begin{enumerate}
    \item the trade distributions $r_k\stackrel{p}{\rightarrow}0$ in probability for all $k$.
    \item the secondary allocation weights $b_{i,k}\rightarrow 0$ for all $i,k$.
  \end{enumerate}

\end{theorem}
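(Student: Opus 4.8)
The plan is to collapse the $N$-player game to $N$ independent one-dimensional maximisations whose solutions visibly depend only on the data named in the statement; existence of an equilibrium is then automatic, and convergence falls out of a routine stability-of-argmax argument.

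First I would fix an arbitrageur $i$ together with an \emph{arbitrary} profile $p_{-i}$ of the other quotes and write out $i$'s expected profit as a function of his single quote $p_i$. Because $p_\oracle$ is constant through a batch and hedging on $\oracle$ is frictionless, any $\market$-trade that moves the marginal price from $a$ to $b$ contributes, together with its $\oracle$-hedge, exactly $\Pi(a,b):=\int_a^b(p_\oracle-x)\rho(x)\,dx$ to profit, where $\rho$ is the depth of $\market$, and $\Pi(a,c)=\Pi(a,b)+\Pi(b,c)$. Since $i$ posts the \emph{same} price $p_i$ at every slot he holds, a maximal run of consecutive slots awarded to $i$ --- call it a \emph{block} --- consists of an entry trade from the incoming price $q$ to $p_i$, followed at each subsequent slot of the block by a trade that exactly undoes the price impact $I(r_k)$ of the intervening order $\sell(r_k)$. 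Putting $g(p):=\Pi(p,p_\oracle)\ge 0$ (which vanishes only at $p=p_\oracle$) and $m_k(p):=\int_p^{p+I(r_k)}(x-p_\oracle)\rho(x)\,dx$, additivity of $\Pi$ telescopes the block profit to
\[
  \Pi(q,p_\oracle)\;-\;g(p_i)\;+\;\sum_{k\ \text{internal to the block}} m_k(p_i),
\]
and the leading term is the only one that sees $p_{-i}$, while being itself free of $p_i$. Summing over $i$'s blocks, taking expectations, and using that the expected number of blocks of $i$ equals $\sum_k a_{i,k}(1-b_{i,k})$ (with the convention $b_{i,0}:=0$) while slot $k$ lies internal to a block of $i$ with probability $\Probability[\alpha(k)=\alpha(k-1)=i]=a_{i,k}b_{i,k}$, the whole $p_{-i}$-dependence becomes an additive constant, so $\hat p_i$ maximises
\[
  F_i(p)\;:=\;-\Big(\textstyle\sum_k a_{i,k}(1-b_{i,k})\Big)g(p)\;+\;\textstyle\sum_k a_{i,k}b_{i,k}\,\Expectation_{r_k}\!\big[m_k(p)\big].
\]
Now $F_i$ involves only $\market$, $p_\oracle$, the marginal law of each $r_k$, and the weights $a_{i,k},b_{i,k}$, so $\hat p_i$ --- hence the slot-$k$ marginal quote, which is the $(a_{i,k})_i$-mixture of the $\hat p_i$ --- depends only on that data, and $\hat p_i$ is a dominant strategy; in particular $(\hat p_1,\dots,\hat p_N)$ is a Nash equilibrium since each coordinate best-responds against \emph{every} $p_{-i}$. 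This gives the first assertion.

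For the two limits I would argue by stability of the maximiser. Under the standing integrability hypotheses on $\rho$ and on the trade-size laws, $F_i$ is upper semicontinuous and coercive --- $g(p)\to\infty$ as $p$ leaves a neighbourhood of $p_\oracle$ whereas the manipulation terms grow strictly more slowly --- so $F_i$ attains its maximum and, uniformly along the families in (1) or (2), all maximisers stay in one fixed compact $J$. In case~(1) the hypothesis $r_k\stackrel{p}{\rightarrow}0$ gives $I(r_k)\to 0$, hence $m_k\to 0$ and, by domination, $\Expectation_{r_k}[m_k(\cdot)]\to 0$ uniformly on $J$; in case~(2) the hypothesis $b_{i,k}\to 0$ drives the manipulation weights to $0$ while the coefficient of $-g$ tends to $\sum_k a_{i,k}>0$. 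In both cases $F_i$ converges uniformly on $J$ to a positive multiple of $-g$, whose unique maximiser is $p_\oracle$ and which is strictly concave there ($g''(p_\oracle)=\rho(p_\oracle)>0$), so the standard argmax-continuity lemma yields $\hat p_i\to p_\oracle$; passing to the $(a_{i,k})_i$-mixture gives $\hat p\to p_\oracle$ at every slot.

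The conceptual heart is the telescoping identity --- it is precisely there that one sees \emph{why} only the single and binary marginals of $\alpha$ survive --- but that step is essentially bookkeeping. The real work, and the likely main obstacle, is the analytic input of the last paragraph: pinning down exactly which regularity and tail conditions on the depth $\rho$ and on the $r_k$ make $F_i$ upper semicontinuous and \emph{uniformly} coercive, so that maximisers exist and remain in a single compact along the limiting families, after which argmax-stability is routine. One small point worth recording in passing is that the block trading pattern described above really is the shape of every best response; here this is forced, since each arbitrageur is restricted to a single quote and so a block of consecutive slots mechanically executes that pattern.
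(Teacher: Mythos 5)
Your first half is, up to bookkeeping, the paper's own argument in different clothes: the paper splits expected utility slot-by-slot into a frontrun cost $-C(s_i)$ (weight $a_{i,k}$), a sandwich revenue $\Expectation[C(s_i+r_k)\mid\alpha(k)=i]$ (weight $a_{i,k}b_{i,k}$), and terms not depending on $s_i$, which is exactly your block/telescoping decomposition after regrouping ($a_{i,k}b_{i,k}(C(s+r_k)-C(s)) - a_{i,k}(1-b_{i,k})C(s) = a_{i,k}b_{i,k}C(s+r_k) - a_{i,k}C(s)$); both routes make it transparent that only $p_\oracle$, the $r_k$-marginals and the single and binary allocation statistics enter, and that the optimiser is opponent-independent, hence dominant. (One small discrepancy: you take the unconditional $\Expectation_{r_k}[m_k]$, which implicitly assumes blind allocation; the paper keeps the conditioning on the allocation event, and only its informal summary speaks of marginals.) Where you genuinely diverge is the analytic step. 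The paper differentiates under the expectation (licensed by its standing assumption that $|C(s+r)|$ is dominated by an integrable bound), writes the first-order "lamination" equation, reads off directly that its solution forces $\bar\phi(s^*)\to1$ as $r\stackrel{p}{\to}0$ or $b_{i,k}\to0$, and gets existence/uniqueness by a continuity argument \emph{after restricting the action space} $A$ to a neighbourhood of $x_\oracle$ on which the objective is strictly concave. You instead invoke argmax stability under uniform convergence on a compact, which needs your coercivity claim "$g\to\infty$ while the manipulation terms grow strictly more slowly." That claim does not follow from the paper's hypotheses and can actually fail for fixed parameters: the paper assumes $C(s+r)$ bounded on $A$ (so nothing is coercive), and when the coupling is large (e.g.\ a monopolist, where the paper notes $M_r(\lambda)\check w_i$ may exceed $1$) the expected sandwich term $b_{i,k}\Expectation[C(s+r_k)]$ can dominate $C(s)$ away from equilibrium, so the global maximiser need not exist or lie where you want it; this is precisely why the paper's theorem is localised to a suitable $A\ni x_\oracle$ rather than global. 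You flag this as the remaining work, correctly; the fix is either to adopt the paper's localisation (restrict to a neighbourhood where strict concavity holds and argue via the first-order condition), or to impose explicit tail conditions making your uniform coercivity true along the limiting families. With that repair your argument delivers the same conclusion, at the cost of Berge-type machinery the paper avoids.
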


We also obtain an explicit equation \eqref{lamination-equation} for the dominant strategy in terms of the allocation weights and an expected price impact function.

As a special case, we obtain a formalisation of the claim made in the previous section that myopic arbitrageurs provide passthrough pricing.

\begin{corollary}[Myopic arbitrageurs provide passthrough pricing, \ref{theorem-myopic}] 

  Suppose that all secondary allocation weights vanish.
  Then the dominant strategy for each arbitrageur is to set prices to $p_\oracle$.

\end{corollary}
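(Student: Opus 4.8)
The plan is to read the corollary off the existence-and-convergence theorem, and in particular off the lamination equation \eqref{lamination-equation}, evaluated at the boundary value $b_{i,k}=0$. By that theorem, best responses are dominant and each arbitrageur's optimisation problem decouples into an independent sub-problem at each slot he is allocated; so it suffices to show that, when every secondary weight vanishes, the dominant quote at each slot is $p_\oracle$, no matter what strategies the other arbitrageurs use.

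First I would translate the hypothesis $b_{i,k}=0$ into a statement about the allocation. Since $\Probability[\alpha(k-1)=i,\ \alpha(k)=i]=a_{i,k}\,b_{i,k}$ and, after shifting the index, $\Probability[\alpha(k)=i,\ \alpha(k+1)=i]=a_{i,k+1}\,b_{i,k+1}$, the vanishing of all secondary weights forces player $i$ to hold no two consecutive slots, almost surely (the front slot $0$, having no left neighbour, being automatically isolated on its left). Consequently the total round-trip profit of player $i$ is, almost surely, a sum over the isolated slots he is dealt, with no cross terms: no slot he holds is ever adjacent to another one he holds, so there is no intervening liquidity order whose execution price he both sets up and later unwinds. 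This is precisely the degenerate case of the profit decomposition supplied by the theorem, and it kills the `sandwich' summand in \eqref{lamination-equation}.

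Next I would analyse a single isolated slot $k$. Entering it, the venue $\market$ has some marginal price $q$ — whatever the previous (other) arbitrageur's quote and the impact of $r_k$ have left it at — and, by the standing assumptions, hedging on $\oracle$ is available at the fixed price $p_\oracle$ with zero impact. Quoting target price $p$ sweeps exactly the liquidity on $\market$ strictly better than $p$ and hedges the resulting inventory at $p_\oracle$; writing the profit as the integral of $(p_\oracle-x)$ against the $\market$ liquidity density over the price interval between $q$ and $p$, one sees it increases as $p$ moves from $q$ towards $p_\oracle$ and strictly decreases once $p$ overshoots $p_\oracle$, each marginal unit past $p_\oracle$ being traded at a loss. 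Hence the unique maximiser is $p=p_\oracle$, independently of $q$ and hence of every other player's strategy, so passthrough pricing is the dominant strategy and the induced marginal quote is $\hat p=p_\oracle$. Equivalently, substituting $b_{i,k}=0$ into \eqref{lamination-equation} collapses it to $\hat p=p_\oracle$; the corollary adds to the convergence clause of the theorem only that this value is attained exactly, and as a dominant strategy, at $b=0$.

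The step I expect to carry the weight is the second one: establishing genuine state-independence of the per-slot optimum. It rests squarely on the modelling assumptions — frictionless markets, zero-impact hedging on $\oracle$ at a constant reference price, and instantaneous batch execution — which are exactly what make the per-slot profit a function of $p$ alone, maximised at $p_\oracle$ whatever the pre-trade state. The remaining bookkeeping, namely the marginal-weight identity that rules out consecutive slots and the monotonicity of the single-trade profit integral, is routine once the theorem is in hand.
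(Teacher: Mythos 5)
Your proposal is correct and follows essentially the paper's own route: the paper's proof is the one-line observation that setting all $b_{i,k}=0$ in the expected-utility derivative \eqref{expected-utility-final} leaves $\Expectation\Utility_i'(s)\propto \phi(s)-\phi(x_\oracle)$, whose unique zero (a maximum, independent of the other players' strategies) is at $s=x_\oracle$, i.e.\ passthrough pricing. Your per-slot monotone-profit-integral argument is just this same computation spelled out, so there is nothing further to add.
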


The general form \eqref{lamination-equation} for the \emph{lamination equation} obtained for the optimiser $\hat{p}$ is quite complicated.
It can be approached under simplifying assumptions. Suppose:
\begin{enumerate}
  \item 
    The order flow distribution $r$ is \emph{symmetric} in the index set $\{1,\ldots,K\}$. 
    Intuitively, this means arbitrageurs have no information about the absolute ordering of liquidity orders within the batch.
  \item 
    Slots are allocated to arbitrageurs by independent weighted lotteries (that are also independent of $\vec{r}$).
    That is, each participating arbitrageur $i$ has some weight $0\leq w_i\leq 1$, where $\sum_iw_i=1$, and slots are allocated to $i$ by Bernoulli trials (weighted coin tosses) with weight $w_i$.
\end{enumerate}
Under these conditions, approximating the log price impact function by its linearisation yields an approximation to the equilibrium price manipulation coefficient in terms of a `zeta function'
\[
  Z_{\phi,r}(w)\defeq \frac{1-w}{1-M_r(\lambda)w}
\]
where $M_r$ is the moment generating function of the distribution of $r$ and $\lambda=\frac{d}{dx}\log\phi(1) > 1$ is a weighting determined in terms of the liquidity curve of the market.\footnote{In a Balancer-style weighted 2-asset CPMM pool, the quantity $\lambda$ is the reciprocal of the pool share of the num\'eraire.}

\begin{remark}
  In practical situations we can expect the error to be quite small ($<1\%$), more than enough to make order-of-magnitude judgements.
\end{remark}

\begin{theorem}[Zeta function approximation to manipulation coefficient, \ref{theorem-zeta}]

  Suppose that the liquidity curve $\phi$ of $\market$ is approximated by its linearisation at $p_\oracle$ with error at most $C$.
  Then
  \[
    \log(\hat p/p_\oracle) \approx Z_{\phi,r}(\check w)
  \]
  with error at worst $C$.

\end{theorem}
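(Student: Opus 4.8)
The plan is to run the explicit lamination equation~\eqref{lamination-equation} for the dominant quote $\hat p$ through the two standing simplifications — order flow $r$ symmetric in the index set, and slots assigned by independent weighted Bernoulli lotteries — then solve it in closed form after replacing the log–liquidity curve $\log\phi$ by its affine approximant at the reference level, and finally estimate the error of that replacement. By Theorem~\ref{theorem-existence} we already know $\hat p$ is well defined and depends on the allocation only through the primary and secondary weights, so it is legitimate to work directly with those rather than with the full allocation $\alpha$.

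The first block of work is bookkeeping: the two assumptions collapse the weights and kill the positional dependence. Independence of the coin tosses across slots gives $b_{i,k}=\Probability[\alpha(k-1)=i\mid\alpha(k)=i]=w_i$ with no $k$-dependence, and stationarity gives $a_{i,k}=w_i$; symmetry of $r$ means every expected price-impact term entering~\eqref{lamination-equation} depends only on the \emph{length} of a maximal run of consecutive slots held by one arbitrageur, not on where the run sits. Substituting, the equilibrium condition for an arbitrageur of weight $w$ becomes a sum over run lengths $n$, each weighted by the geometric probability $(1-w)w^{\,n}$. Now linearise: writing $\log\phi(x)\approx\log\phi(1)+\lambda(x-1)$ with $\lambda=(\log\phi)'(1)$ and $\phi(1)=p_\oracle$, the log price impact of a trade $r_k$ becomes $\lambda r_k$, so each extra slot in a run contributes a multiplicative factor $\Expectation[e^{\lambda r_k}]=M_r(\lambda)$. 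The run-length sum is then the geometric series $\sum_{n\ge 0}(1-w)w^n M_r(\lambda)^n=\tfrac{1-w}{1-M_r(\lambda)w}$ (valid exactly when $M_r(\lambda)w<1$, i.e. when manipulation does not run away), which is $Z_{\phi,r}(w)$; averaging over which arbitrageur holds the slot under consideration and reading off the relevant effective weight $\check w$ — determined by the lottery weights $(w_i)$ — identifies $Z_{\phi,r}(\check w)$ as the value of $\log(\hat p/p_\oracle)$ produced by the \emph{linearised} lamination equation.

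It then remains to show that linearising $\phi$ perturbs $\log(\hat p/p_\oracle)$ by at most the hypothesised $C$. For this I would prove that the map $\Phi=\log\phi\mapsto\log(\hat p/p_\oracle)$ is $1$-Lipschitz for the supremum norm over the range of prices that actually occur in a batch: \eqref{lamination-equation} exhibits $\log\hat p-\log p_\oracle$ as a probability-weighted average of values $\Phi$ at post-trade price levels, the mixing weights being nonnegative and summing to one, so a uniform change of $\Phi$ by at most $C$ changes the output by at most $C$. Applying this with $\Phi$ the true log-curve and $\Phi_{\mathrm{lin}}$ its linearisation, for which $\|\Phi-\Phi_{\mathrm{lin}}\|_\infty\le C$ is exactly the hypothesis, gives $\bigl|\log(\hat p/p_\oracle)-Z_{\phi,r}(\check w)\bigr|\le C$.

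The crux is the $1$-Lipschitz claim when~\eqref{lamination-equation} is genuinely implicit — that is, when the post-trade levels at which $\Phi$ is sampled themselves move with $\hat p$. A black-box fixed-point perturbation bound only yields Lipschitz constant $1/(1-\kappa)$ for whatever contraction factor $\kappa$ controls the self-reference, which exceeds $1$ unless the self-reference is trivial; to land the clean constant one has to use the monotone/averaging structure of the specific equation — e.g.\ that the self-referential part enters only as a shift of $\hat p$ that is itself one of the $\Phi$-averages, so the two perturbations compose rather than reinforce. A more mundane point to nail down is the precise sense (which norm, on which price interval) in which ``$\phi$ is approximated by its linearisation with error at most $C$'' is meant, so that it lines up with the sup-norm used above; I would fold this into the statement of the hypotheses rather than leave it to the reader.
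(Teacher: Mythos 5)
Your route to the closed form is not the paper's and, as written, does not actually follow from the model. The paper's derivation is purely algebraic: under the three standing hypotheses the lamination equation \eqref{lamination-equation} collapses to the single relation \eqref{optimization-symmetric}, $\bar\phi(s^*)-1=\check w_i\,(\Expectation[\bar\phi(s^*+r)]-1)$, and the exponential ansatz $\bar\phi(x_\oracle+x)=e^{-\lambda x}$ makes the expectation \emph{factorise} as $\Expectation[\bar\phi(s^*+r)]=\bar\phi(s^*)\,M_r(\lambda)$, so the condition is linear in $\bar\phi(s^*)$ and one reads off $\bar\phi(s^*)=Z_{\phi,r}(\check w_i)$ exactly whenever $\check w_i M_r(\lambda)<1$. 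Your ``sum over maximal runs weighted by $(1-w)w^n$'' does not come out of the first-order condition: the equilibrium equation couples a slot only to the immediately preceding one through $b_{i,k}$, and the geometric series is merely the post-hoc Neumann expansion of the closed form, so presenting it as the equilibrium condition leaves the actual mechanism (the factorisation above) unproved. Relatedly, you conclude that $Z_{\phi,r}(\check w)$ is the value of $\log(\hat p/p_\oracle)$, whereas the computation gives $\hat p/p_\oracle=\bar\phi(s^*)=Z_{\phi,r}(\check w_i)$ and the error claim in the body of the paper is $|\log\bar\phi(s^*)-\log Z_{\phi,r}(\check w_i)|<C$; the statement handed to you contains a typo here, but your own series should have flagged it, since it tends to $1$, not $0$, as $w\rightarrow 0$.

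The second, more substantive gap is the error bound, which is the entire content of the theorem beyond the exact computation, and which you explicitly leave open. Your $1$-Lipschitz claim is asserted rather than proved, and its premise is already off: the averaging structure of \eqref{optimization-symmetric} is in terms of $\bar\phi$, namely $\bar\phi(s^*)=(1-\check w_i)\cdot 1+\check w_i\,\Expectation[\bar\phi(s^*+r)]$, not a convex combination of values of $\log\phi$, so the map you want to be a sup-norm contraction of constant $1$ is not the one the equation exhibits. Moreover the difficulty you flag is real: writing $\bar\phi(x_\oracle+x)=e^{-\lambda x+g(x)}$ with $\|g\|_\infty\le C$ and $u=s^*-x_\oracle$, the exact condition gives $\bar\phi(s^*)=(1-\check w_i)\big/\big(1-\check w_i\,\Expectation[e^{-\lambda r+g(u+r)-g(u)}]\big)$, and the crude estimate $|g(u+r)-g(u)|\le 2C$ only yields a log-error of order $2C\,\check w_i M_r(\lambda)/(1-\check w_i M_r(\lambda))$, which exceeds $C$ unless $\check w_i M_r(\lambda)$ is small. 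So landing the stated constant $C$ needs a sharper argument (or an explicit restriction on $\check w_i M_r(\lambda)$), and your proposal supplies neither; acknowledging the crux and sketching where a fix ``should'' come from does not discharge it.
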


\paragraph{Per-slot pricing}
The reader familiar with existing backrun services such as MEV-Blocker and MEV-Share may find it surprising that we require our backrunners to each choose a single target price per batch, rather than allowing them the freedom to craft separate backrun transactions for each slot.
There are a few reasons for this:
\begin{itemize}
  \item 
    Giving arbitrageurs the ability to set different target prices for different slots in the same batch requires us to posit an additional labelling of the slots that players can use to declare their moves, complicating the game design.
  \item
    More choices for arbitrageurs generally means more leeway to perform manipulation strategies.
    Conversely, the most powerful bounds on manipulation are likely to be available with the most constrained action space for MEV actors.
\end{itemize}

Nonetheless, we study this more complicated model briefly in \S\ref{multi-price}, formulating a generalised laminated queue model in which liquidity orders and arbitrage slots are labelled by an auxiliary set of strings $\labels$ and arbitrageurs may quote a different price for each label.
Labels are mapped to execution positions by a random indexing $\mathrm{idx}:\labels\rightarrow[K]$
We find that under suitable blinding assumptions for backrunners, the Nash equilibrium computation for this labelled lamination game collapses to the single-price one.

\begin{theorem}[\ref{theorem-multi}]

  For sufficiently small order flow or coupling, and suitable action space, each arbitrageur has a dominant strategy.
  If the mapping from labels to execution positions is uniformly random, this dominant strategy is to set all prices to the same value: the optimum price of the associated uniform marginal price game.

\end{theorem}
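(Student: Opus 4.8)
\emph{Step 1 (reduction to marginals).} The plan is to repeat the analysis behind Theorem~\ref{theorem-existence}, now carried out for the strategy vector $(p_i(\kappa))_{\kappa\in\labels}$ of each arbitrageur, and then to exploit the symmetry created by a uniform indexing $\mathrm{idx}$. First I would write arbitrageur $i$'s expected utility in the labelled game: conditioning on the labels allocated to $i$ and integrating over the blinded quantities — the indexing $\mathrm{idx}$, the order flow $\vec r$, and the allocation of the remaining labels to the other players — the utility becomes a function of $i$'s price vector alone, parametrised by generalised weights: a \emph{primary} weight $a_{i,\kappa}=\Probability[i\text{ holds the slot labelled }\kappa]$, and a \emph{secondary} weight $b_{i,\kappa,\kappa'}$, the probability that $i$ additionally holds the slot whose execution position is one less than that of $\kappa$, that predecessor carrying label $\kappa'$. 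As in the single-price case, the dependence on the \emph{other} players' prices enters only through strategy-independent reference-price terms and drops out of $i$'s optimisation; this separability is exactly what the proof of Theorem~\ref{theorem-existence} supplies, applied slot by slot.

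\emph{Step 2 (dominant strategy via concavity).} The expected utility splits into a sum of \emph{primary} terms, one per label $\kappa$ and depending on $p_i(\kappa)$ alone, plus \emph{sandwich} corrections coupling pairs $(p_i(\kappa),p_i(\kappa'))$ and each carrying a factor of a secondary weight and of an expectation of the (log) price-impact function. On the relevant compact action space — a box of prices around $p_\oracle$ on which $\log\phi$ is well behaved — the primary terms are strictly concave with curvature bounded below, while the coupling terms vanish as either the order flow $r\stackrel{p}{\to}0$ (price impact $\to 0$) or the secondary weights $\to 0$. Hence for sufficiently small order flow or coupling the Hessian is uniformly negative definite, the objective has a unique maximiser, and — being independent of the opponents' choices — that maximiser is a dominant strategy. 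This proves the first assertion.

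\emph{Step 3 (symmetry forces equal prices, and identifies the value).} Now suppose $\mathrm{idx}$ is uniformly random. Then, in the notation of Step~1, for each label $\kappa$ the position $\mathrm{idx}(\kappa)$ is uniform on $[K]$, so combined with the symmetry of the order flow the primary term is the \emph{same} function of $p_i(\kappa)$ for every $\kappa$; and the event underlying $b_{i,\kappa,\kappa'}$ — that $i$ holds two slots landing in consecutive positions with the given labels — has probability depending on $\kappa,\kappa'$ only through whether they coincide, by exchangeability of the uniform indexing. Thus the whole expected-utility function is invariant under the group $\mathrm{Sym}(\labels)$ permuting the coordinates $p_i(\kappa)$. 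By Step~2 it has a unique maximiser $\hat p_i$; a symmetric function with a unique maximiser attains it at a fixed point of the symmetry group, and the fixed points of $\mathrm{Sym}(\labels)$ on $\mathbb{R}^{\labels}$ are the constant vectors, so $\hat p_i(\kappa)$ is independent of $\kappa$. Restricting the objective to this diagonal $p_i(\kappa)\equiv p$ collapses it, term by term, to the expected utility of the single-price game with slots allocated by the uniform marginal lottery and the uniform marginal order flow — the associated uniform marginal price game — whose optimiser is characterised by the lamination equation~\eqref{lamination-equation}. Hence the dominant strategy is to quote that single price at every label.

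\emph{Main obstacle.} The delicate point is Step~3: one must verify that uniformity of $\mathrm{idx}$ genuinely renders \emph{all} the joint events appearing in the coupled sandwich terms exchangeable over labels — in particular that the coupling between a label and the label of its execution-predecessor is itself label-symmetric — and, relatedly in Step~2, that the cross-label Hessian blocks introduced by these sandwich terms are truly $O(\text{order flow})$ or $O(\text{coupling})$, so that the uniform curvature bound, and hence uniqueness of the maximiser, survives. Once exchangeability and the uniform lower bound on curvature are established, the symmetrisation argument and the collapse to the single-price game are formal.
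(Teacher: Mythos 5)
Your proposal is correct and follows essentially the paper's own route: the first claim is handled exactly as in Theorem~\ref{theorem-existence} (separability of the objective plus strict concavity near $x_\oracle$ for small coupling or order flow), and the second by observing that under a uniformly random $\mathrm{idx}$ (together with the blindness/symmetry of $\vec r$ and $\alpha$ that the paper leaves implicit) the objective is $\Sigma_K$-invariant, so its unique maximiser must lie on the diagonal and the restriction there is the diagonal (single-price) subgame. Your fixed-point-of-the-symmetry-group phrasing is just a more explicit rendering of the paper's terse ``by $\Sigma_K$-symmetry the nearby maximum remains on the diagonal'' deformation remark, and your flagged obstacle (label-exchangeability of the coupling events) is precisely the hypothesis the paper's statement compresses into ``suitable blinding assumptions.''
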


\subsection{Related work}

A plethora of approaches have been floated to combat the combined menace of wasteful MEV market structures and exposing liquidity traders to unpredictable, disequilibrium, or manipulated outcomes.
It is beyond our scope here to give a full overview of extant models; we list here only a few that are particularly close in spirit to the laminated batch model.

\paragraph{Exotic queue disciplines}
The lamination model presented in this paper and the probabilistic methods used to analyse it are inspired by the very general queue theoretic approach to blockchain based markets introduced in \cite{macpherson2023adversarial}.

In \cite{ferreira2022credible}, a `greedy sequencing rule' is introduced with a similar goal of ensuring that prices remain `close' to a constant value throughout the batch, with all directional trades isolated to one end of the block.
%
%
In comparison to that work, the dynamics of price movements in the laminate model is a natural product of incentive structures rather than hardcoded into the queue discipline.

\paragraph{Private mempools}
Perhaps the most basic approach to combatting price manipulation by frontrunning is to impose pre-confirmation privacy.
While it may seem intuitively clear that private transactions benefit from frontrunning protection, it is a matter of debate whether this always leads to improved expectation outcomes for their originators \cite{schoneborn2009liquidation,marshall2023false}.

Privacy, encoded as randomness in the distribution of market orders from the arbitrageur's perspective, is part of our model.
The arguments of this paper explore some of its effects and limitations.
As we confirm in \S\ref{multi-price}, sufficient privacy does indeed hobble some of the most powerful targeted value extraction strategies.
However, even with very limited information about unconfirmed liquidity trades --- just an ansatz about their distribution and the market structure --- profitable price manipulation is still possible, negatively impacting trader outcomes.

\paragraph{Frequent batch auctions}
A simple and popular way to completely eliminate ordering based MEV is to ensure that all trades commute by aggregating them together into batches and executing them at the same \emph{uniform clearing price} (UCP).
The batch auction approach has been the subject of substantial recent interest \cite{budish2015high, cowprotocoloverview, gong2023order, canidio2023arbitrageurs}.
In \cite{canidio2023arbitrageurs}, the authors derive strong conclusions about their batched CF-AMM similar in spirit to the present paper: in our language, that in a perfectly competitive and frictionless market, arbitrageurs deliver passthrough pricing to the batch.
However, unlike in the lamination model, those arbitrageurs obtain zero revenue in equilibrium.

The most popular implementation of blockchain-settled batch auctions, the CoW protocol \cite{cowprotocoloverview}, also attempts to distinguish between makers and takers with the presence of `liquidity orders,' which at least in spirit echoes the laminate model's bipartite queues.

Another difference between lamination batches and traditional UCP batches is that traders do \emph{not} get uniform \emph{execution} prices, only uniform pre-execution \emph{marginal} prices.
Larger trades still have a greater instantaneous market impact and therefore generally receive a worse execution price.
Conversely, in a UCP batch auction, the price impact of larger trades is effectively socialised across smaller trades.

\subsection{Future directions}
For practical applications and relevance to real-world instantiations of laminated batches, it would be useful to lift some of our hypotheses:
\begin{itemize}
  \item
    If we want our model to eventually apply to exchanges that account for a significant fraction of volume for a particular asset pair, we will have to lift the assumption that $\market$ is a price taker. 
  \item
    To apply the model to limit order books or CFMMs with a multi-block horizon, we will need to allow a dynamic liquidity structure.    
  \item
    Our model assumes that all market orders are denominated in the same asset (the `risky' one).
    More realistic dynamics could be obtained in a symmetric model that does not prefer a particular asset as the num\'eraire and allows orders to be denominated in any asset.
\end{itemize}
It would also be interesting to analyse the effect on incentives of auction-based mechanisms for realising $\alpha$, and reward redistribution systems such as volume-weighted or position-dependent fees and rebates.

\paragraph{Acknowledgements}
The author is grateful to Quintus Kilbourn and Evan Kim for valuable feedback on early drafts of this work.

\section{Model}
\subsection{Markets} \label{markets}

Our model of a trading venue follows the approach of \cite{milionis2023complexity}.

\begin{definition} \label{market}
  A \emph{market} \(\market\) consists of the data of a \emph{price density function} \(\phi_\mathcal{M}:U\rightarrow\mathbb{R}\) defined on an open subset \(U\subseteq(0,\infty)\) which is piecewise \(C^\infty\) (possibly with discontinuities) and monotone decreasing. The parameter of the function is called the \emph{liquidity depth}. The market is said to be \emph{smooth} if \(\phi\) is everywhere \(C^\infty\) and \emph{invertible} if it is strictly monotone. 
\end{definition}

The data of a market can be interpreted as follows.
Suppose we have a risky asset \(A\) and a num\'eraire \(B\) that can be traded. 
Suppose that both assets are arbitrarily divisible and that a quantity \(r\) of \(A\) can be sold for \(\hat{C}(r)\) of \(B\), where \(\hat{C}\) is some piecewise-differentiable function defined in a neighbourhood \(U_0\) of \(r=0\). 
Then setting \(U=U_0+x\), where \(x>0\) is chosen arbitrarily so that \(U\subseteq(0,\infty)\), and \(\phi(u):=d\hat{C}(r)/dr(u-x)\) defines a price density curve in the sense of Definition \ref{market}. 

\begin{example}[CFMM]

  Let \(f:(0,\infty)^2\rightarrow\R\) be a CFMM (where $f$ is \(C^\infty\) with generically positive partial derivatives) and \(\lambda\in\R\) a level. 
  The indifference set \(f^{-1}(\lambda)\) is an embedded submanifold of \((0,\infty)^2\) that projects diffeomorphically onto an open subset of either axis. 
  Then there exists an open set \(U\subseteq\R\) such that the projection \(f^{-1}(\lambda)\rightarrow (0,\infty)\) admits a section \(s:U\rightarrow f^{-1}(\lambda)\). 
  Composing this section with the other projection gives a smooth real-valued function \(P:U\rightarrow (0,\infty)\). 
  Then \(\phi_f:=-dP/dt\) defines a market on reserve set \(U\) in the sense of Definition \ref{market}.

\end{example}

\begin{example}[Reference market] 
  
  The highly liquid reference market \(\oracle\) arises as a limiting case where \(\phi:(0,\infty)\rightarrow\mathbb{R}\) is constant and the liquidity depth $x_0 \gg 0$ is large compared to all other quantities under consideration.

\end{example}

\begin{definition}[Action] \label{cost}
  
  Fix an initial market depth \(x\in U\). Since $\phi$ is monotone, it is integrable over intervals within its domain, so we can define an \emph{(absolute) cost function} or \emph{action}
  \begin{align*}
    \action(x,y)&\defeq \int_x^y \phi(u)du \\
    &= \Phi(y) - \Phi(x)
  \end{align*}
  where $\Phi$ is an antiderivative of $\phi$.
  It is the revenue of a trader that moves the liquidity depth on $\market$ from $x$ to $y$ (by selling $y-x$ units of \(A\)). 
  When \(y<x\), \(\action(x,y)<0\) and $\action(y,x)=-\action(x,y)$ is the cost to \emph{buy} $x-y$ units of \(A\).

\end{definition}

If $\phi$ is defined by a CFMM $f$ and level $\lambda$, then $\action(x_0,x)$ is a constant plus the amount of num\'eraire needed to balance an amount $x$ of the risky asset.

A cyclic arbitrage that moves the reserve depth from $x$ to $y$ consists of buying $x-y$ units of $A$ on $\market$ and selling the same number of units on $\oracle$.
It results in a gain of \(\action(x,y) + p_\oracle\cdot(x-y) \) in the num\'eraire.
We will need to refer to this quantity a lot, so we introduce notation for it.

\begin{definition}[Opportunity cost]

  The \emph{relative} or \emph{opportunity cost} function at \(x_0\in U\) is the quantity 
  \begin{align}
    C(x) &\defeq \phi(x_0)\cdot (x-x_0) + \action(x,x_0). 
  \end{align}
  It is the cost of an arbitrage that moves the reserve depth from $x_0$ to $x$.
  Both terms have the same sign, and if \(x>x_0\) (resp.~if \(x<x_0\)), the linear term (resp. nonlinear term) dominates.
  Hence this function is continuous, convex, and valued in non-negative reals with a minimum at \(C(x_0)= 0\). 
  If $\phi$ is $\mathcal{C}^k$, then $C$ is $\mathcal{C}^{k+1}$ with derivatives 
  \begin{align}
    \label{cost-derivative}
    C'(x) &= \phi(x_0)-\phi(x)\\ 
    \label{cost-second-derivative}
    C''(x) &= -\phi'(x) \qquad (>0).
  \end{align}
  In words, the derivative of opportunity cost is marginal price differential.
  Since it is convex, it is also locally integrable on the domain of $\phi$.

\end{definition}

Note that the opportunity costs at different liquidity depths \(x_0,x_1\) differ by a linear term with gradient \(\phi(x_1)-\phi(x_0)\).

\begin{remark} 
  
  In economic terms, the quantity \(C\) can be interpreted as a surplus supply or demand of the risky asset \(A\). 
  If \(x_1>x_\mathcal{O}\), then it is a surplus supply (of the risky asset), i.e.~\(\mathcal{M}\) will sell this amount below the odds.
  Correspondingly, if \(x<x_\mathcal{O}\), it is a surplus demand.

\end{remark}

\subsection{Game}\label{game}

We consider a game of \(N\) players \(X_1,\ldots,X_n\) parametrised by the following data:
\begin{enumerate}
  \item A market $\market$ with liquidity structure $\phi:U\rightarrow (0,\infty)$.
  \item A natural number $K\in\N$. We write \([K]_+=\{0\}\sqcup [K]\).
  \item A sequence \(r_1,\ldots,r_K\) of real numbers, representing market orders. 
  \item A map $\alpha:[K]_+\rightarrow[N]$.
  \item A positive real number $x_0>0$.
  \item A positive real number $x_\oracle>0$. We write $p_\oracle\defeq \phi(x_\oracle)$.
\end{enumerate}
In situations of imperfect information, we will consider $x_0$, the $r_i$, and $\alpha$ as random variables.
For simplicity, $K$ and $x_\oracle$ will generally be fixed (i.e.~known to all players).
Each player's beliefs about the other parameters comprise a distribution on
\[
  \mathcal{H}_{N,K} = \R\times \R^K \times N^{K_+}.
\]
with the three factors representing the space of $x_0$, $r$, and $\alpha$, respectively.

We recall also the definitions of the \emph{primary} and \emph{secondary} allocation weights
\[
  a_{i,k}\defeq \Probability[\alpha(k)=i] \qquad b_{i,k}\defeq\Probability[\alpha(k-1)=i\mid \alpha(k)=i]
\]
where $b_{i,k}$ is defined whenever $a_{i,k}>0$.

\begin{remark}[Generality of the information assumptions]
  
  No effective generality is lost by the assumption that \(N\) and \(K\) are known to all players, since any situation with an unknown (but bounded) number of players and trades can be represented by a suitable random allocation function $\alpha$ from a fixed large $K$ into a fixed large $N$.

\end{remark}

\paragraph{Actions and utilities} Each player has action space $A\subset (0,\infty)$.
To simplify the treatment of indices, we define $s_0\defeq x_0$, $r_0\defeq 0$, and write $\alpha(k)=0$ for $k=-1$.
That is, the `zeroth market order' has zero size, and the zeroth player `nature' plays $x_0$ --- the starting liquidity --- in the slot immediately preceding the first `real' slot.
Given an action profile $\vec{s}\in A^N$, the utility of player $i$ is
\begin{align} \label{utility}
  \Utility_i\left(\vec{s}\right) &\defeq \sum_{\alpha(j)=i} C(s_{\alpha(j-1)} + r_j) - C(s_i) \\
  &= \sum_{\alpha(j)=i}\left[\action(s_{\alpha(j-1)} + r_j, s_i) + p_\oracle\cdot (s_{\alpha(j-1)} + r_j - s_i)\right].
\end{align}
So that expectations are defined, we will assume that $|C(s+r)|$ is bounded by an integrable function on all of $A$ and the range of $r$.

\subsection{Interpretation} \label{interpretation}

The interpretation of this game is as follows: $N$ players insert arbitrage trades \(s_{\alpha(0)},\ldots,s_{\alpha(K)}\) interleaving \(K\) market orders $(r_1,\ldots,r_K)$ on a market \(\market\) between assets \(A\) and \(B\) (the numéraire). 
The numbers $r_i$ represent the size and direction of the orders, denominated in \(A\), with positive $r_i$ indicating a sell order.
The liquidity curve $\phi_\market$ does not change during this sequence of trades; this hypothesis is plausible for a CFMM DEX but not a limit order book.
If $\alpha(n)=i$, then player $n$ will have a chance to insert an arbitrage in the $n$th slot, where slot $0$ is top of block and slot $k$ is immediately after the $k$th trade for $k>0$.

A player's beliefs about the distribution of $r_i$ are a forecast of the size and direction of incoming trades.
His beliefs about $\alpha$ could derive from common knowledge that the allocation is uniformly pseudo-randomly derived from cryptographically secure entropy source, or it could be a forecast about the outcome of some mechanism (e.g.~an auction).

The absolute boundedness assumption on $C(s+r)$ means that the marginal price remains bounded independently of the actions of arbitrageurs and traders.
Clearly, this assumption is verified on any market that can be implemented on a finite computer.
It is violated on some idealised continuum models, a key example being a CPMM with $A=(0,\infty)$.
To deal with this case, we should instead bound $A$ away from $0$, i.e.~$A=(\epsilon,\infty)$ for some strictly positive $\epsilon$.

In play, each player \(X_i\) chooses a \emph{target depth} \(s_i\). 
If the market is invertible, this is equivalent to choosing a marginal market \emph{price} \(\phi(s_i)\), i.e.~creating an arbitrage $\arb(\phi(s_i))$.
Otherwise, the target depth is the more fundamental parameter since this determines directly the amount that must be traded.

All players may also trade on a reference market \(\oracle\) at a constant price \(p_\oracle\).
In \S\ref{markets} we have established that the payout for moving the liquidity depth on $\market$ from $x$ to $y$, then trading the same amount in the opposite direction on $\oracle$ is $C(x)-C(y)$.

For simplicity we assume that utility for players is measured only in terms of asset \(B\). 
That is, players do not gain utility from holding \(A\) in inventory.
To gain utility, players must complete a pure profit cyclic arbitrage as above.

\subsection{Information structures} 

Our model is general enough to capture a broad range of important hypotheses about the information about order flow and slot allocation available to arbitrageurs when they make their commitments.

\begin{definition}[Monopoly] \label{monopoly}

  A backrun game is a \emph{monopoly} if there exists an $n\in[N]$, the \emph{monopolist}, such that the image of $\alpha$ is $\{n\}$ almost surely.

\end{definition}

\begin{definition}[Freeness] \label{locally-free} \label{free}

  A backrun game is \emph{free} if $\alpha$ is injective with unit probability.
  That is, for all $i,j\in [K]_+$ we have $i\neq j \Rightarrow \Probability(\alpha(i)=\alpha(j))=0$.
  In particular, for all $i,j\in [K]_+$ and $n\in[N]$, $i\neq j \Rightarrow \Probability(\alpha(j)=n\mid \alpha(i)=n)=0$.\footnote{The adjective `free' is chosen by analogy with the notion of a free, or non-interacting theory in physics. Intuitively, distinct backrun opportunities do not interact with one another.}
  
  It is \emph{locally free} if $\alpha$ (a.s.) does not map any two consecutive slot indices to the same player.
  It is equivalent to say that the couplings $b_{i,k}=\Probability(\alpha(k-1)=i|\alpha(k)=i)$ vanish for all positive $a_{i,k}=\Probability(\alpha(k)=i)$.

\end{definition}

\begin{definition}[Symmetry]

  The order vector \((r_i)_{i=1}^K\) is  \emph{symmetric} if the $r_i$ are identically distributed, that is, their marginal distributions are all the same. 
  It is \emph{strictly symmetric} if they are symmetrically distributed, that is, their joint distribution is invariant under the permutation action of $\Sigma_K$.
  A strictly symmetric order vector is symmetric.

  The allocation $\alpha$ is \emph{symmetric} if its distribution is invariant under the right action of $\Sigma_{K+1}$ on $[N]^{[K]_+}$.

\end{definition}

Intuitively, a [strictly] symmetric order vector, resp.~allocation, means that the agent whose perspective we share --- in this paper, the arbitrageur deciding on a price target --- has no information about the ordering of liquidity orders, resp.~the ordering of the allocation.
They still may have information about the content of orders, for example, the unordered set of orders from which the batch will be drawn, or the allocation weights, for example the number of slots they will be allocated.
They may even have information about the contents of the orders they will backrun.

\begin{definition}[Blind allocation] \label{blind}

  A laminated queue has \emph{blind allocation} if $\alpha(k)$ and $r_k$ are independent random variables for all $k=1,\ldots,K$.
  The condition extends automatically to the case where $k$ is itself a $[K]$-valued random variable.
\end{definition}

\begin{remark}

  Intuitively, a laminated queue can be non-blind if the mechanism that determines $\alpha$ `knows' about $r$.
  For example, this can happen if $\alpha$ is determined by an auction among arbitrageurs, and arbitrageurs have some information about liquidity orders when they make their bids.
  So for example, the MEV-Blocker allocation is non-blind, while MEV-Share allocation can in principle be blind if the maximum privacy settings are used.

  Our definition also permits that liquidity traders know about $\alpha$ before creating their orders.
  This can be the case for single-player bundling mechanisms where the arbitrageur also chooses $\alpha$.

  Note that a monopolistic allocation is always blind (because a deterministic random variable is always independent of any other random variable).

\end{remark}

\begin{example}[Randomly permuted order list] \label{permuted-order-list}

  Let $r=(r_i)$ be a vector of order distributions, and let $\sigma\sim\mathrm{Unif}(\Sigma_K)$ be a uniformly random permutation independent of $(r_i)$.
  Then the random vector $\sigma^*r$ with $i$th coordinate $r_{\sigma (i)}$ is strictly symmetric.

\end{example}

\begin{example}[Randomly permuted allocation] \label{permuted-allocation}

  Similarly, if \(\alpha\) is any random allocation and \(\sigma\sim\mathrm{Unif}(\Sigma_{K+1})\) is independent of \(\alpha\), then \(\alpha\circ\sigma\) is symmetric.

\end{example}

\begin{example}[Known matching of players to orders, unknown ordering] \label{known-matching-unknown-ordering}

  Suppose $(r_i)$ is defined, as in Example \ref{permuted-order-list}, as a permutation of a known vector $(r'_i)\in\R^K$ by a uniformly random permutation $\sigma$.
  Let $\alpha':[K]_+\rightarrow [N]$ be a known allocation, and define $\alpha\defeq\alpha'\circ\sigma$.
  Then $\alpha$ is symmetric, and player $n$ is allocated the slots backrunning the orders $\{r_j\mid \alpha(j)=n\}$ with probability $1$.
  See also \S\ref{multi-price}.

\end{example}

\begin{example}[Independent allocations] \label{independent-allocation}

  Let $\alpha'_k$, $k=0,\ldots,K$ be a sequence of i.i.d.~$[N]$-valued random variables with pmf $w$.
  That is, each player $i=1,\ldots,N$ flips a $w(i)$-weighted coin to land a move in slot $k$ for each $k=0,\ldots,K$.
  Then $\alpha'$ is symmetric.
  If $\sigma$ is a random permutation (not necessarily uniform), then $\alpha'\circ\sigma$ is also a vector of i.i.d.~random variables.

\end{example}

\subsection{Solution} \label{solution}

The general formula for expected payoffs splits as a weighted sum of conditional expectations:
\begin{align} \label{expected-utility}
  \Expectation (\Utility_i(\vec{s})) &= %
    \sum_{k=0}^K \Probability(\alpha(k)=i) \cdot \big( %
      \underbrace{\Expectation[C(s_{\alpha(k-1)}+r_k)\mid \alpha(k)=i]}_\text{backrun revenue} - %
      \underbrace{\Expectation[C(s_i)\mid \alpha(k)=i]}_\text{frontrun cost} %
    \big) \\
    &= \sum_{k=0}^K a_{i,k}\Expectation\Utility_{i,k}(\vec{s}),
\end{align}
where the $k$th term 
\begin{align}
  \Expectation\Utility_{i,k}(\vec{s}) \defeq %
    & \Expectation\left[C(s_{\alpha(k-1)}+r_k)\mid \alpha(k)=i\right] - C(s_i)
\end{align}
is the expected payoff of being allocated slot $k$ given $\vec{s}$ (when $a_{i,k}>0$).

The expected backrun revenue in slot $k$ depends on player $i$'s strategy $s_i$ only when player $i$ also wins the previous slot, i.e.~$\alpha(k-1)=i$.
The contribution to the optimisation problem of the backrun revenue summand of $\Expectation\Utility_{i,k}(\vec{s})$ is therefore weighted by the secondary allocation weights $b_{i,k}$.
Note that $b_{i,0}=0$ for all $i$, since by definition $\alpha(-1)=0$ is not a real player.

We therefore compute
\begin{align}
  \Expectation\Utility_{i,k}(\vec{s}) &= b_{i,k}\cdot \Expectation[C(s_i+r_k)\mid \alpha(k)=i] - C(s_i) + \{\text{terms not depending on }s_i\}.
\end{align}
The summand $\Expectation[C(s_i+r_k)\mid \alpha(k)=i]$ is the backrun or end slice revenue from running a successful sandwich.
Note that the conditioning drops out of the expression if we assume that the allocation is blind in the sense of Definition \ref{blind}.

For the purposes of finding a best response strategy to $\vec{s}_{\hat{i}}$, we can disregard the last term and simply try to optimise the sum of expressions 
\begin{equation}
  \widetilde{\Expectation\Utility}_{i,k}(s) \defeq b_{i,k}\cdot \Expectation[C(s+r_k)] - C(s).
\end{equation}
Since $|C(s+r_k)|$ is by hypothesis bounded independently of $s$ and $r$, we may move the derivative inside the expectation and obtain 
\begin{align}
  \Expectation\Utility'_i(s\mid\vec{s}_{\hat{i}}) &= \sum_{k=0}^K a_{i,k} \cdot\widetilde{\Expectation\Utility}'_{i,k}(s) \\
  &= \sum_{k=0}^K a_{i,k} \cdot\left( b_{i,k}\cdot \Expectation[C'(s+r_k)\mid \alpha(k)=i] - C'(s) \right) \\
  \label{expected-utility-final} 
  &= \sum_{k=0}^K a_{i,k} \cdot\left(\phi(s) - b_{i,k}\cdot \Expectation[\phi(s+r_k)\mid \alpha(k)=i] + (b_{i,k}-1)\phi(x_\oracle)\right). 
\end{align}
Note that this formula depends only on the marginal distributions of the $r_k$ and on the \emph{pairwise} joint distributions of the $\alpha(k)$, not on the full joint distribution.
Therefore, optimising by setting $\Expectation\Utility'_i(s\mid\vec{s}_{\hat{i}})=0$ gives us the \emph{lamination equation }
\begin{equation} \label{lamination-equation}
  \left(\sum_{k=0}^K a_{i,k}\right) \cdot (\bar{\phi}(s^*) - 1) = \sum_{k=1}^K a_{i,k}b_{i,k}\left(\Expectation[\bar{\phi}(s^*+r_k)\mid \alpha(k)=i] - 1  \right), \qquad i=1,\ldots,K
\end{equation}
where we have written $\bar{\phi}(s)\defeq \phi(s)/ p_\oracle$ for the \emph{price manipulation factor}.
Note that the index $0$ does not appear on the right hand side, because $b_{i,0}=0$.

Clearly, the solution $s^*$ to this equation, if it exists, does not depend on the strategies $s_{-i}$ of other players, and satisfies $\bar\phi(s^*) \rightarrow 1$ under either of the following r\'egimes:
\begin{enumerate}
  \item $r\stackrel{p}{\rightarrow} 0$.
  \item $b_{i,k}\rightarrow 0$.
\end{enumerate}
By a continuity argument, a solution ${s^*}$ exists in an $L^0$-neighbourhood of $\{(b_{i,k})_{k=1}^K=0\}\cup\{r \stackrel{\mathrm{a.s.}}{=} 0\}$.

\begin{theorem} \label{dominant-strategy-unique} \label{theorem-existence}
  
  Assume $\phi$ is continuously differentiable.
  For sufficiently small secondary allocation weights or sufficiently small order flow, there exists an action space $A\subset(0,\infty)$ containing $x_\oracle$ such that the unique solution in $A$ to $\Expectation\Utility'_i(s)=0$ is a dominant strategy.
  \qedhere

\end{theorem}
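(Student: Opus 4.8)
The plan is to read everything off the first-order condition \eqref{expected-utility-final}. Write $G_i(s)\defeq\Expectation\Utility_i'(s)=\sum_{k=0}^K a_{i,k}\bigl(\phi(s)-b_{i,k}\,\Expectation[\phi(s+r_k)\mid\alpha(k)=i]+(b_{i,k}-1)\,p_\oracle\bigr)$. The structural fact, already exposed in that computation, is that $G_i$ does not involve the opponents' actions $\vec{s}_{\hat{i}}$ at all: they enter $\Expectation\Utility_i$ only through backrun-revenue terms $C(s_{\alpha(k-1)}+r_k)$ with $\alpha(k-1)\ne i$, which are constant in $s_i$ and were discarded. So once we produce a compact interval $A\ni x_\oracle$ on which $\Expectation\Utility_i(\cdot\mid\vec{s}_{\hat{i}})$ is strictly concave for every profile $\vec{s}_{\hat{i}}$ of the others, its unique maximiser over $A$ is a single point independent of $\vec{s}_{\hat{i}}$ --- hence a (weakly) dominant strategy in the game whose action space is restricted to $A$. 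The theorem thus reduces to the one-variable statement: in either limiting régime, $G_i$ has a unique zero near $x_\oracle$ with $G_i'<0$ around it.

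Step one is to inspect the two limits. If all $b_{i,k}=0$ then $G_i(s)=\bigl(\sum_k a_{i,k}\bigr)(\phi(s)-p_\oracle)$; if $r\equiv 0$ almost surely then $\Expectation[\phi(s+r_k)\mid\alpha(k)=i]=\phi(s)$ and $G_i(s)=\bigl(\sum_k a_{i,k}(1-b_{i,k})\bigr)(\phi(s)-p_\oracle)$. In both cases $G_i(s)=c\,(\phi(s)-p_\oracle)$ with $c>0$, under the harmless nondegeneracy that player $i$ wins some slot with positive probability (and, in the $r\equiv0$ case, that $\sum_k a_{i,k}(1-b_{i,k})>0$, automatic once $a_{i,0}>0$ since $b_{i,0}=0$, e.g.\ for a monopoly; if even this fails every strategy is trivially optimal). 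Since the marginal price is strictly decreasing, i.e.\ $\phi'(x_\oracle)<0$ (the content of $C''=-\phi'>0$), the factor $\phi(s)-p_\oracle$ has a simple zero at $x_\oracle$, with $G_i'(x_\oracle)=c\,\phi'(x_\oracle)<0$. Fix once and for all a closed interval $A=[x_\oracle-\delta,\,x_\oracle+\delta]\subset U$ on which $c\,\phi'\le-\mu<0$.

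Step two is the perturbation argument. The claim is that $G_i$ and $G_i'$ depend continuously, in the supremum norm over the fixed compact $A$, on the vector $(b_{i,k})_k$ and on the joint law of $(r_k)_k$ in the topology of convergence in probability. Granting this, for $(b_{i,k})$ small enough, or for $r_k\stackrel{p}{\rightarrow}0$ far enough along, we keep $G_i'\le-\mu/2<0$ on all of $A$ while $|G_i(x_\oracle)|$ is as small as we wish; strict monotonicity of $G_i$ on $A$ allows at most one zero, and the small value at the midpoint with slope $\le-\mu/2$ forces a sign change within $A$, giving exactly one zero $s_i^*\in A$. Then $G_i'<0$ on $A$ makes $\Expectation\Utility_i(\cdot\mid\vec{s}_{\hat{i}})$ strictly concave on $A$ for every $\vec{s}_{\hat{i}}$, so $s_i^*$ is its unique maximiser there, it is dominant by the reduction above, and it is by construction the unique zero of $\Expectation\Utility_i'$ in $A$ --- the quantitative form of the $L^0$-neighbourhood existence assertion following \eqref{lamination-equation}.

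The main obstacle is the continuous-dependence claim, and specifically the small-order-flow direction: one must push $r_k\stackrel{p}{\rightarrow}0$ through both $\Expectation[\phi(s+r_k)\mid\alpha(k)=i]$ and its $s$-derivative $\Expectation[\phi'(s+r_k)\mid\alpha(k)=i]$, uniformly for $s\in A$. The pointwise limits are immediate from continuity of $\phi$ and $\phi'$; the interchange of limit and expectation is licensed by the same boundedness of $|C(s+r_k)|$ in $s$ and $r$ that already justified differentiating under $\Expectation$ in deriving \eqref{expected-utility-final} (dominated convergence), and uniformity in $s$ follows from equicontinuity of $\{\phi(\cdot+r)\}$ on the compact $A$. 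Conditioning on $\alpha(k)=i$ is a bookkeeping nuisance, not a genuine one: it only reweights the law of $r_k$, and the same domination survives, so no blindness hypothesis is needed. Finally, as anticipated in \S\ref{interpretation}, the freedom to shrink the action space to $A$ is precisely what lets this cover non-compact idealisations such as the CPMM with $A$ bounded away from $0$.
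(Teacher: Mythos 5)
Your proposal is correct and follows essentially the same route as the paper: exploit that \eqref{expected-utility-final} is independent of the opponents' strategies, note that in either limiting r\'egime $\Expectation\Utility_i'$ reduces to a positive multiple of $\phi(s)-p_\oracle$ with a simple zero at $x_\oracle$ and negative slope, then perturb by continuity and restrict to a neighbourhood where the utility is strictly concave so the unique critical point is dominant. You merely spell out the ``continuity argument'' (uniform convergence of $G_i$ and $G_i'$ on a compact interval via dominated convergence) that the paper leaves implicit.
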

\begin{proof}

  We have established that $\Expectation\Utility_i(s_i\mid \vec{s}_{\hat i})$ has a local maximum ${s^*}$ close to $x_\oracle$ for small $(b_{i,k})_{k=0}^K$ or $(r_k)_{k=0}^K$.
  By continuity, it is strictly convex on some neighbourhood $A$ of ${s^*}$ containing $x_\oracle$.

\end{proof}

\begin{corollary} \label{dominant-strategy-passthrough} \label{theorem-myopic}

  Suppose that $\alpha$ is locally free (Definition \ref{locally-free}), that is, player $i$ almost surely lands no two consecutive slots.
  Then player $i$'s unique dominant strategy is passthrough pricing.

\end{corollary}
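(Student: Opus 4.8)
The plan is to read the result straight off the lamination equation \eqref{lamination-equation} together with Theorem \ref{theorem-existence}. By Definition \ref{locally-free}, local freeness of $\alpha$ is exactly the statement that every secondary allocation weight $b_{i,k}$ vanishes; in particular the smallness hypothesis on the secondary weights in Theorem \ref{theorem-existence} is met (trivially, since they are all zero), so that theorem furnishes an action space $A\subset(0,\infty)$ with $x_\oracle\in A$ on which the unique critical point $s^*$ of $\Expectation\Utility_i(\,\cdot\mid\vec{s}_{\hat{i}})$ is a dominant strategy. As in the analysis preceding the theorem I will take $\sum_{k}a_{i,k}>0$, i.e.\ player $i$ lands some slot with positive probability; otherwise $\Utility_i\equiv 0$ and the statement is vacuous, every strategy (passthrough included) being trivially dominant though no longer uniquely so.

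Next I would substitute $b_{i,k}=0$ into the derivative formula \eqref{expected-utility-final}. Every conditional expectation of $\phi(s+r_k)$ drops out and the expression collapses to
\[
  \Expectation\Utility'_i(s\mid\vec{s}_{\hat{i}}) \;=\; \Big(\textstyle\sum_{k=0}^K a_{i,k}\Big)\big(\phi(s)-\phi(x_\oracle)\big);
\]
equivalently, the right-hand side of \eqref{lamination-equation} is identically $0$ and the equation reads $\big(\sum_k a_{i,k}\big)\big(\bar\phi(s^*)-1\big)=0$. Since $\sum_k a_{i,k}>0$, any critical point satisfies $\phi(s^*)=\phi(x_\oracle)=p_\oracle$, and $s=x_\oracle$ is visibly such a point. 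Invoking the uniqueness clause of Theorem \ref{theorem-existence}, $s^*=x_\oracle$ on $A$. Playing target depth $x_\oracle$ sets the post-arbitrage marginal price on $\market$ to $\phi(x_\oracle)=p_\oracle$, i.e.\ it is the order $\arb(p_\oracle)$ --- precisely passthrough pricing in the sense of Section \ref{introduction}. Hence passthrough pricing is player $i$'s unique dominant strategy.

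There is no genuine obstacle here beyond bookkeeping; the only point worth flagging is that $\phi(s)=\phi(x_\oracle)$ need not force $s=x_\oracle$ globally when $\phi$ is merely weakly monotone, which is exactly why the conclusion is phrased relative to the action space $A$ of Theorem \ref{theorem-existence}, on which the critical point is unique by construction. If $\market$ is additionally assumed invertible, the identification $s^*=x_\oracle$ holds globally and one may restate the dominant strategy directly as the price quote $p_\oracle$.
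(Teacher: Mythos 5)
Your proposal is correct and follows exactly the paper's route: the paper's proof is simply ``Immediate from the form \eqref{expected-utility-final} of expected utility,'' and you have spelled out precisely that computation --- setting $b_{i,k}=0$ so the derivative collapses to $\bigl(\sum_k a_{i,k}\bigr)\bigl(\phi(s)-\phi(x_\oracle)\bigr)$ --- together with the appeal to Theorem \ref{theorem-existence} for uniqueness and dominance that the paper leaves implicit. Your added remarks on the degenerate case $\sum_k a_{i,k}=0$ and on weak monotonicity of $\phi$ are sensible refinements, not deviations.
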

\begin{proof}

  Immediate from the form \eqref{expected-utility-final} of expected utility. \qedhere

\end{proof}

\subsection{Symmetric trade distribution with independent slot allocation}

When the information structure is highly symmetric and uncoupled, the formulas for optimal arbitrage simplify considerably.
Under such hypotheses we can derive approximations to quotes in a simple closed form that can be used to derive bounds on manipulation.
For the remainder of this section, we suppose:
\begin{itemize}
  \item Liquidity order flow $(r_k)_{k=1}^K$ is symmetric (arbitrageurs have no information about the sequence order);
  \item Allocations are blind (arbitrageurs do not know more about the trades they are backrunning than they do about a generic trade);
  \item Slot allocations for player $i$ are determined by Bernoulli trials with weight $w_i=a_{i,k}=b_{i,k}$. 
  That is, allocations are independent in the sense of Example \ref{independent-allocation}, hence symmetric.
\end{itemize}
Then the lamination equation \eqref{lamination-equation} simplifies to
\begin{equation}
  \bar\phi({s^*}) - 1 = \frac{K}{K+1} w_i(\Expectation[\bar\phi({s^*}+r)]- 1). \label{optimization-symmetric}
\end{equation}
In the interests of economy of notation, we absorb the `top of block' factor into the allocation weight, writing $\check{w}_i\defeq \frac{K}{K+1} w_i$ henceforth.
The sandwich term $\bar\phi(s^*+r)$ makes it hard to solve this equation for ${s^*}$ directly.
Instead, we proceed under an exponential approximation that turns out to have very small error in practice.
We also assume in this section that sizes are normalised so that equilibrium liquidity depth is unity $x_\oracle=1$.

\begin{definition}[Exponential price impact]

  Suppose $\bar\phi(x_\oracle + x)=\bar\phi(1+x)=\exp(-\lambda x)$ for some coefficient $\lambda>0$.
  Then setting RHS of equation \eqref{optimization-symmetric} splits the product 
  \[
    \Expectation(\bar\phi(s+r)) = \bar\phi(s)\cdot \Expectation(\bar\phi(x_\oracle + r))
  \]
  of the manipulation factor by the expected price impact factor at equilibrium of the order flow $r$.

  We can also express the expected price impact at equilibrium
  \[
    \Expectation(\bar\phi(x_\oracle + r)) = \Expectation[e^{-\lambda r}] = M_r(\lambda)
  \]
  in terms of the moment generating function $M_r$ of the distribution of $r$.

\end{definition}

\begin{example}[CPMM]

  If $\market$ is a weighted CPMM with invariant $x^\alpha y^\beta$, we have $\phi(x)=x^{-(1+\alpha/\beta)}$.
  The exponential approximation at $x=1$ is determined by 
  \[
    \lambda = \frac{d}{dx}\log\phi(1) = -(1+\alpha/\beta),
  \]
  that is the reciprocal of the pool share of the num\'eraire \cite{martinelli2019non}.

  The linearisation error is quite good for reasonable ranges of $r$: for example, in the unweighted case ($\lambda=2$) we can get a bound $C=0.011$ for $r\in[-0.1,0.1]$ or $C=0.000101$ for $r\in[-0.01,0.01]$.
  
\end{example}

\begin{remark}[Better linear approximations]

  With few changes, we could replace the linearisation at $x_\oracle$ with an optimal linear or affine $L^\infty$-approximation to $\log\bar\phi$ on its domain.
  However, we then lose the nice interpretation of the slope $\lambda$.

\end{remark}

If price impact is assumed exponential, we can gather like terms in equation \eqref{optimization-symmetric} to obtain
\[
  \bar\phi({s^*})\cdot\left(1-\check w_i \cdot M_r(\lambda)\right) = 1- \check w_i.
\]
The coefficient of $\bar\phi({s^*})$ is positive as long as the expected price impact factor $M_r(\lambda)$ is at most $\check w_i^{-1}$.
In this case we can divide through to find
\begin{equation} \label{exp-manipulation-factor}
  \bar\phi({s^*}) = Z_{\phi,r}(\check w_i) \defeq  \frac{ 1-\check w_i }{ 1-\check w_i\cdot M_r(\lambda) }.
\end{equation}
In general, when we have only $||\log\bar\phi(1+r) + \lambda r || < C$ for some $C>0$, this gives us the \emph{zeta function approximation} for $\bar\phi({s^*})$.

\begin{remark}
  
  If $w_i$ is large --- say, if player $i$ is a monopolist (Definition \ref{monopoly}) --- then the bound $M_r(\lambda)\check w_i < 1$ might be exceeded for realistic price impact factor expectations.

\end{remark}

\begin{theorem}[Zeta function approximation for price manipulation factor] \label{theorem-zeta}

  Suppose that linear approximation error of $\log\bar\phi(s)$ at $x_\oracle$ is bounded by $C$, that is, 
  \[
    |\log\bar\phi(x_\oracle + r) + \lambda r | < C
  \]
  on the domain of $\phi$.
  Then the zeta function approximation error of the manipulation factor $\log\bar\phi({s^*})$ is also at most $C$, that is,
  \[
    | \log\bar\phi({s^*}) - \log Z_{\phi,r}(\check w_i)| < C
  \]
  for $0\leq \check w_i\leq \min\{1,M_r(\lambda)^{-1}\}$.
  

\end{theorem}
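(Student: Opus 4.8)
The plan is to convert the symmetric lamination equation into the zeta formula with a single perturbed input, and then estimate that perturbation. Write $\tilde\phi(s)\defeq e^{-\lambda(s-x_\oracle)}$ for the exponential model and $g(s)\defeq\log\bar\phi(s)+\lambda(s-x_\oracle)$ for the linearisation error, so that $\bar\phi(s)=\tilde\phi(s)\,e^{g(s)}$ and, by hypothesis, $|g|<C$ throughout the domain. The identity that makes the exponential model solvable is $\tilde\phi(s+r)=\tilde\phi(s)\,e^{-\lambda r}$, whence $\Expectation[\tilde\phi(s+r)]=\tilde\phi(s)M_r(\lambda)$; this is exactly why substituting $\tilde\phi$ into \eqref{optimization-symmetric} gives $\bar\phi(s^*)=Z_{\phi,r}(\check{w})$. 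First I would put \eqref{optimization-symmetric} in fixed-point form $\bar\phi(s^*) = (1-\check{w})\,\bar\phi(x_\oracle) + \check{w}\,\Expectation[\bar\phi(s^*+r)]$ and divide by $\bar\phi(s^*)$ --- which is positive and, by Theorem~\ref{theorem-existence}, sits in the regime $\check{w}\hat M<1$ --- to obtain
\[
  \bar\phi(s^*) \;=\; \frac{1-\check{w}}{\,1-\check{w}\,\hat M\,}, \qquad
  \hat M \;\defeq\; \frac{\Expectation[\bar\phi(s^*+r)]}{\bar\phi(s^*)} \;=\; \Expectation\bigl[\,e^{-\lambda r}\,e^{\,g(s^*+r)-g(s^*)}\,\bigr].
\]
Thus $\bar\phi(s^*)$ has precisely the zeta shape \eqref{exp-manipulation-factor}, but with the factor $M_r(\lambda)=\Expectation[e^{-\lambda r}]$ replaced by the \emph{effective impact factor} $\hat M$.

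From this,
\[
  \log\bar\phi(s^*) - \log Z_{\phi,r}(\check{w}) \;=\; \log\frac{1-\check{w}M_r(\lambda)}{1-\check{w}\hat M},
\]
and two estimates close the proof. First, $\hat M/M_r(\lambda)$ is an average of the numbers $e^{g(s^*+r)-g(s^*)}$ against the positive weights $e^{-\lambda r}$, so $\bigl|\log(\hat M/M_r(\lambda))\bigr| \le \sup_r|g(s^*+r)-g(s^*)|$. Second, I transfer this to the manipulation factor: writing the displayed difference as $G(\log\check{w}M_r(\lambda))-G(\log\check{w}\hat M)$ with $G(t)=\log(1-e^t)$, one has $|G'(t)|=e^t/(1-e^t)\le 1$ whenever $e^t\le\tfrac12$, so $G$ is a contraction on the relevant range and the bound passes across with no loss as long as $\check{w}M_r(\lambda)$ stays below $\tfrac12$. (As $\check{w}M_r(\lambda)\to 1$ both $Z_{\phi,r}(\check{w})$ and $\bar\phi(s^*)$ blow up; there one argues directly with the two small numbers $1-\check{w}M_r(\lambda)$ and $1-\check{w}\hat M$, and this is also the regime in which the equilibrium is in danger of leaving the domain.)

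The main obstacle is the first estimate: I want $\sup_r|g(s^*+r)-g(s^*)|\le C$ rather than the $2C$ the triangle inequality gives from $|g|<C$. This holds whenever $g$ is single-signed, since then $g(s^*)$ and $g(s^*+r)$ lie in a common interval of length $<C$; and $g$ is single-signed exactly when $\log\bar\phi$ stays on one side of its tangent at $x_\oracle$, i.e.\ when $\log\bar\phi$ is convex (or concave). This is the case for the weighted CPMMs that motivate the theorem --- there $\log\bar\phi(s)=-\lambda\log(s/x_\oracle)$ is convex, so $g(s)=\lambda\bigl((s/x_\oracle)-1-\log(s/x_\oracle)\bigr)\ge 0$ for every $s$. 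For a general market one would instead use the convex-combination form above together with monotonicity of $\bar\phi$ to confine $s^*$, and hence the relevant values of $s^*+r$, to a controlled neighbourhood of $x_\oracle$; but the clean constant $C$ genuinely rests on the error $g$ not oscillating there, and dropping that, the honest bound degrades to $2C$.
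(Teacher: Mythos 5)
Your core computation is sound and is essentially the error-transfer argument the paper leaves implicit (the paper offers no proof beyond the exact exponential derivation of \eqref{exp-manipulation-factor}): the fixed-point form of \eqref{optimization-symmetric}, the exact identity $\bar\phi(s^*)=(1-\check w_i)/(1-\check w_i\hat M)$ with the effective factor $\hat M=\Expectation[\bar\phi(s^*+r)]/\bar\phi(s^*)$, and the averaging bound $|\log(\hat M/M_r(\lambda))|\le\sup_r|g(s^*+r)-g(s^*)|$ are all correct. But as a proof of the theorem \emph{as stated} there are two gaps, both of which you flag and neither of which you close. First, $|g|<C$ only yields $\sup_r|g(s^*+r)-g(s^*)|<2C$; your improvement to $C$ requires the error $g$ to be single-signed (log-convexity/concavity of $\bar\phi$), which holds for the CPMM examples but is not among the theorem's hypotheses, so on the stated assumptions your route delivers $2C$, not $C$.

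Second, and more seriously, the transfer step fails on the stated range of $\check w_i$. The map $M\mapsto\log(1-\check w_i M)$ has sensitivity $\check w_i/(1-\check w_i M)$, which is $\le 1$ only when $\check w_i M\le\tfrac12$; the theorem claims the bound for all $0\le\check w_i\le\min\{1,M_r(\lambda)^{-1}\}$, and near the right endpoint this factor blows up. Your parenthetical suggestion to ``argue directly with the two small numbers'' cannot rescue a uniform constant: at $\check w_i=M_r(\lambda)^{-1}$ one has $\log Z_{\phi,r}(\check w_i)=+\infty$ while $\bar\phi(s^*)=(1-\check w_i)/(1-\check w_i\hat M)$ remains finite whenever $\hat M<M_r(\lambda)$ --- a possibility your own estimate on $\hat M/M_r(\lambda)$ does not exclude (even in the convex CPMM case the differences $g(s^*+r)-g(s^*)$ can be negative). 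So what your argument actually proves is a bound of $2C$ (or $C$ under one-sidedness) on the restricted range $\check w_i M_r(\lambda)\le\tfrac12$, i.e.\ a weaker statement than claimed; to match the theorem you must either import extra hypotheses (single-signed linearisation error, $\check w_i$ bounded away from the critical value --- note that Theorem~\ref{theorem-existence} in any case only guarantees that $s^*$ exists and solves \eqref{optimization-symmetric} for small $\check w_i$ or small $r$, which is the regime where your contraction argument works) or accept that the constant and range in the statement need qualification. State explicitly which of these weaker conclusions you are proving rather than leaving both patches as remarks.
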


\begin{remark}[Sizing $r$]
  Note that the relevant figure here is the expected price impact of each individual liquidity trade.
  If traders tend to split their larger orders up into many small orders, bounds can be easily satisfied without constraining the trading volume.
\end{remark}

\paragraph{Numerical computations} Expanding the zeta function approximation via the Newton-Mercator series, we find
\begin{align}
  \log\bar\phi({s^*}) &\approx \log(1-\check w_i) - \log ( 1-\check w_i\cdot M_r(\lambda) ) \\
  &= \sum_{n=1}^\infty \frac{1}{n}\left(\check w_i^n - (\check w_i\cdot M_r(\lambda) )^n \right) \\
  &= \sum_{n=1}^\infty \frac{1}{n}\check w_i^n\cdot \left(1  - M_r(\lambda)]^n \right)
\end{align}
We can obtain bounds on these power series by bounding $\check w_i$ and $\left|M_r(\lambda)-1\right|$ away from $1$.
In practice it is reasonable to bound these quite small so as to get an estimate of $\log\bar\phi({s^*})$ to a few decimal places.

\begin{example}[Bounding the log manipulation coefficient]

  Suppose $D>\max\{w_i,w_i\cdot M_r(\lambda)\}$.
  Then $\log\bar\phi(s) = w_i\cdot(1 - M_r(\lambda)) + O(D^2)$.
By itself, this bound is perhaps not as powerful as we would like; for example, if $D=0.1$ then we get an approximation error of up to $0.1$.
It becomes more so if you assume also that $Q=|M_r(\lambda)-1|$ is small, say $0.01$; we find, for example, that
\[
  \log\bar\phi({s^*}) = -2w_i^2 + O(D^3, QD, Q^2)
\]
with an error of less than $0.01$.

\end{example}

\subsection{Trader experience}
Suppose a liquidity trader makes an order $q$ which will be sequenced in slot $k\stackrel{\$}{\in}\{1,\ldots,K\}$.
Suppose backrunner allocations are independent as in Example \ref{independent-allocation}, where the $i$th player has weight $w_i\in[0,1]$ and internal order flow model $\sim r_i$.
Let $\bar p=\bar\phi(s_{\alpha(k-1)})$ be the normalised marginal price at the time that $q$ executes.

Our liquidity trader does not know player $i$'s model for the distribution of order flow, and so player $i$'s dominant strategy $s_i^*$ must be treated as a random variable.
Let us model beliefs of arbitrageurs about order flow distribution as a distribution over some parameter space $\beliefs$ of probability measures on $\R$.
Then $s^*$ and hence $\bar p$ can be regarded as a (deterministic) function on $[N]\times\beliefs$.
Equivalently, taking weights, we could model it as a function on $[0,1]\times\beliefs$.
We can use these parametrisations to make statements about the distribution of $\log\bar p$.

\begin{example}[Equal weights]

  Suppose that allocation weights are evenly distributed, i.e.~$w_i=1/N$ for all $i=1,\ldots,N$.
  Then the distribution of $(w,r)$ is supported on $\{1/N\}\times\beliefs$, and we can treat $\bar p$ as a deterministic function of the $\beliefs$-valued random variable $r$.
  
  It makes sense to ask how the value of $N$ contributes to $\bar p$ in this case.
  Under the zeta function approximation, we have $\bar p\approx Z_{\phi,r}\left(\frac{K+1}{KN}\right)$ as distributions with error bounded a.s.~independently of $N$.
  This random variable tends to $1$ a.s.~as $N\rightarrow\infty$.
  That is, unsurprisingly, as the arbitrageur population becomes more decentralised and the market power of any individual arbitrageur tends to zero, the marginal prices experienced by traders tend to the oracle price $p_\oracle$ with absolute error depending only on the error of the linear approximation to $\phi$.

\end{example}

\section{Generalisations}

Our core laminated batch model with market orders invites generalisations in a number directions that may make it more realistic.

\subsection{Per-slot pricing} 
\label{multi-price}

In real-world systems like MEV-Share and MEV-Blocker, algorithmic traders can propose different backrun order (bundles) for each target user transaction separately.
If elected to backrun a particular target transaction, the backrunner is guaranteed atomic execution of the target-backrun pair.
Backrunners can specify a target transaction in their order by a label (i.e.~hash) without necessarily having any information about the eventual execution ordering among target-backrun bundles. 

We can extend our trading model --- with liquidity orders standing in for user transactions and arbitrageurs for backrunners --- to capture this action space by introducing a random \emph{labelling} bijection $\mathrm{idx}:\labels\rightarrow [K]$ of the slots (excluding top of block).
The allocation function is replaced by a mapping $\alpha:\labels\rightarrow[N]$, and the liquidity trade vector is also indexed by $\labels$.
Each player $i$ chooses a $\labels$-indexed family of target prices $(s_{i,k})_{k\in\labels}$.
Then $X_i$'s payoffs are as if he played $s_{i,h}$ in the $\mathrm{idx}(h)$th slot for each $h\in\alpha^*(i)$.

Indexing by label $h\in\labels$, the payoffs are defined as follows:
\begin{equation}
  \Utility_i\left( (s_{i,h})_{h\in\labels} \mid \vec{s}_{\hat{i}} \right) \defeq %
    \sum_{h\in\alpha^*(i)} C(s_{\alpha(h'),h'} + r_h ) - C(s_{i,h})
\end{equation}
where $\mathrm{idx}(h') = \mathrm{idx}(h)-1$.

If we restrict the action space to the diagonal, we can reindex the labelling out of the utility formula and recover a game of the form considered in \S\ref{game}; call this the \emph{diagonal subgame}.

The utility optimisation equation takes a very similar form to \eqref{expected-utility-final}:
\begin{equation}
  \Expectation\Utility'_i\left( (s_{i,h})_{h\in\labels}\right) = %
    \sum_{h\in\labels}a_{i,h}\left(\sum_{h\in\labels}b_{i,h} \Expectation[C(s_{i,h'} + r_h )\mid \alpha(h)=i ] - C(s_{i,h}) \right).
\end{equation}
However, the randomness of the unknown indexing $\mathrm{idx}$ feeds into $h'$ and hence $s_{i,h'}$, so unlike the diagonal case this value must itself be treated as a random variable.

\begin{proposition} \label{theorem-multi}

  For sufficiently small allocation weights or order flow, there exists an action space $A\subset(0,\infty)^K$ containing $(x_\oracle,\ldots,x_\oracle)$ such that all players in this game have a unique dominant strategy.

  If the sequence order $\mathrm{idx}:\labels\rightarrow[K]$ is uniformly random, then the dominant strategy is the same as the dominant strategy for the diagonal subgame, that is, to set all $s_{i,h}$ to the same value.

\end{proposition}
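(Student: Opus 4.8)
The plan is to mimic the argument for Theorem \ref{theorem-existence} and Corollary \ref{theorem-myopic}, but carried out coordinatewise over the label set $\labels$. First I would write out the first-order conditions for the labelled game. Fixing a player $i$ and a label $h \in \labels$, differentiating the expected utility in $s_{i,h}$ picks up two kinds of contributions: a frontrun cost term $-C'(s_{i,h})$ coming from the slot labelled $h$ itself, and backrun revenue terms of the form $b_{i,h'}\,\Expectation[C'(s_{i,h} + r_{h'})\mid\cdots]$ coming from every label $h'$ that could land \emph{immediately after} $h$ in execution order, i.e.\ with $\mathrm{idx}(h') = \mathrm{idx}(h)+1$. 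The key structural observation is that the only way $s_{i,h}$ enters the backrun revenue for the successor slot is multiplicatively weighted by a secondary allocation weight, exactly as in the single-price case \eqref{expected-utility-final}. Hence, just as before, when either all $b_{i,k}\to 0$ or $r\stackrel{p}{\to}0$, the first-order condition for $s_{i,h}$ degenerates to $\phi(s_{i,h}) = p_\oracle$ up to a small perturbation, with the Hessian in $s_{i,h}$ dominated by $-\phi'(s_{i,h}) > 0$; and the cross-derivatives $\partial^2/\partial s_{i,h}\partial s_{i,h''}$ and $\partial^2/\partial s_{i,h}\partial s_{j,\cdot}$ are each carried by a factor of $b$ or by the size of $r$, hence uniformly small. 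So on a suitable box $A\subset(0,\infty)^K$ around $(x_\oracle,\ldots,x_\oracle)$ the expected utility of player $i$ is strictly concave in his own vector $(s_{i,h})_h$ uniformly in the opponents' profile, giving a unique best response that does not depend on $\vec s_{\hat i}$ — a dominant strategy. This handles the first assertion.

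For the second assertion I would exploit the permutation symmetry induced by a uniformly random $\mathrm{idx}$. The point is that, because $\mathrm{idx}$ is uniform on bijections $\labels\to[K]$ and (by the blinding assumptions carried over from the diagonal setting) independent of the allocation and the trade vector, the conditional law of the predecessor label's data given that $h$ is the slot in question is the \emph{same} for every label $h$: the successor label $h'$ with $\mathrm{idx}(h')=\mathrm{idx}(h)+1$ is a uniformly random element of $\labels\setminus\{h\}$, and the distribution of $r_{h'}$ and of the coupling weight attached to it does not depend on which $h$ we started from. Consequently the system of first-order conditions is invariant under the simultaneous relabelling of the $(s_{i,h})_{h\in\labels}$, so by uniqueness of the solution in $A$ the optimiser must be constant in $h$: $s_{i,h}^* = s_i^*$ for all $h$. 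Restricting the already-derived first-order condition to this diagonal, and using that $\sum_h a_{i,h}$ and the aggregated secondary weights reproduce exactly the coefficients appearing in the diagonal subgame's lamination equation \eqref{lamination-equation}, shows $s_i^*$ solves the diagonal game's optimisation equation; since that solution is itself unique in the diagonal action space, the two coincide.

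The step I expect to be the main obstacle is the second one — making the symmetry reduction rigorous rather than heuristic. The subtlety is that $s_{i,h'}$ appearing in the backrun revenue for label $h$ is itself the (as yet unknown) equilibrium quantity, so one cannot naively "average out" $\mathrm{idx}$ before solving; one has to argue at the level of the fixed-point system. The clean way is: (i) establish uniqueness of the solution of the first-order system in $A$ from the strict-concavity/diagonal-dominance estimate of the first part; (ii) observe that the group $\Sigma_{|\labels|}$ acts on the system of equations by permuting labels and, using the uniformity and independence of $\mathrm{idx}$ together with the blindness hypothesis, check that this action fixes the system; (iii) conclude that the unique solution is a fixed point of the action, hence diagonal. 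One should also be slightly careful that top-of-block is excluded from the labelling (so $\labels$ indexes only the $K$ interior slots), which is why the aggregated weights match the $K/(K+1)$-type factors of the diagonal game; and, as in Theorem \ref{theorem-existence}, the existence half is a continuity/implicit-function argument near the degenerate locus $\{b=0\}\cup\{r=0\}$, which goes through verbatim coordinatewise.
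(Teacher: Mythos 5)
Your proposal is correct and takes essentially the same route as the paper: the existence/uniqueness half is the coordinatewise analogue of the argument for Theorem \ref{dominant-strategy-unique}, and the diagonal conclusion is the paper's $\Sigma_K$-symmetry argument --- the unique optimiser must be fixed by the label-permutation action induced by a uniformly random $\mathrm{idx}$, hence lies on the diagonal --- which the paper phrases as a small-deformation-of-parameters statement. Your steps (i)--(iii) merely spell out what the paper leaves implicit (including the exchangeability across labels needed for the symmetry to act on the first-order system), so this is a more detailed rendering of the same proof rather than a different one.
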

\begin{proof}

  The first claim follows from along the same lines as the proof of Proposition \ref{dominant-strategy-unique}.
  For the second, by $\Sigma_K$-symmetry , if $\Expectation\Utility_i$ has a maximum on the diagonal for some allocation weights and order flow, then for any small deformation of the parameters it has a nearby maximum that remains on the diagonal.
  \qedhere

\end{proof}

In other words, despite its greater input complexity, if backrunners have no information about sequence order then per-slot pricing is no more expressive than the uniform price game discussed in \S\ref{game}.

\begin{example}[MEV-Blocker] \label{mev-blocker}

  Let's compare this model to the way MEV-blocker \cite{mevblocker2023docs} works.
  In MEV-Blocker, the $\mathtt{eth\_sendBundle}$ RPC allows a searcher to construct a transaction bundle with a given (unsigned partial) transaction in the first slot.
  As suggested by the documentation in \emph{loc.~cit.}, a typical use case is to use the rest of the bundle to backrun the target transaction.
  
  The full contents of transactions are visible to the searcher at the time they choose their strategy, so that $(r_h)_{h\in\labels}$ is deterministic.
  However, searchers are not informed of the final sequence order (unless they are integrated with the block builder), so $\mathrm{idx}:\labels\rightarrow[K]$ should be considered a random variable.
  
  The allocation function $\alpha:\labels\rightarrow[N]$ of slots to searchers is determined by an auction, so this random variable models the searcher's beliefs about the outcomes of this auction.
  A searcher's private value for an allocation $\alpha(h)=i$, and hence their bid in the allocation auction, depends on the amount they can expect to gain from backrunning the trade $r_h$ (or frontrunning the subsequent trade).
  This in turn depends on the value of $r_h$.
  That is, $\alpha(h)$ and $r_h$ are not independent, and the MEV-Blocker allocation function is not blind (in the sense of the obvious adaptation of Def.~\ref{blind} to the per-slot pricing game).
  
  This model assumes that all MEV-blocker bundles on the target market are executed consecutively, and not interspersed with any trades originating from other channels.

\end{example}

\begin{remark}[Inferring transaction ordering]

  Even if, as in MEV-Blocker, backrunners are not informed of transaction ordering when they make their decisions, they may be able to use knowledge of the contents of transaction contents to make forecasts about ordering. 
  For example, larger trades may be expected to land closer to the top of the block; concretely, for fixed $R\in\R$, $\Probability(r_k\leq R)$ is smaller for smaller $k$.
  The hypothesis that $\mathrm{idx}$ is uniformly random is unrealistic in this case.

\end{remark}

\subsection{Limit orders}

Allowing liquidity traders to place limit orders introduces $K$ additional parameters to the model.
The qualitative dynamics that result depend on whether the orders permit partial fills or not.
In the latter case, the payoff functions are discontinuous, so we get some interesting phase transitions in strategy space.
In particular, if the limit on a liquidity order is such that it would fail to execute at the oracle price, the optimal strategy for an arbitrageur may be to provide \emph{better} than equilibrium pricing so that the trade executes and they can profit from the backrun.

\begin{example}[Sandwiching a limit order]

  Suppose for simplicity that player $1$ is a monopolist, and $K=1$.
  This is the situation of an atomic sandwich of a single trade $(r,q)$.
  Suppose $r>0$ (so the liquidity trader wishes to sell the risky asset) and $q=\phi(y)$.

  The payoff function for sandwiching a partial fill limit is 
  \[
    C(r_0) - C(s) + C(\min\{q,s+r\}) - C(s).
  \]
  The mixed term is obtained by retracting the action space onto $A\cap(0,q-r]$.

  In the all or nothing case, we get a discontinuous payoff
  \[
    \Utility(s) \defeq C(r_0) - C(s) + \left\{ \begin{array}{cc}C(s+r) - C(s) & s+r\leq q \\ 0 & s+r> q \end{array} \right.
  \]
  and hence a nontrivial phase transition.
  
  Suppose $\sell(r,q)$ would fail to execute at the oracle price, i.e.~$x_\oracle + r = q + \epsilon > x_\oracle$ for some $\epsilon>0$. 
  For small $\epsilon>0$ the term $C(s+r)>0$ dominates the utility expression at $s= x_\oracle -\epsilon$, whence
  \[
    \lim_{s\rightarrow (x_\oracle-\epsilon)_+} \Utility(s) > \lim_{s\rightarrow (x_\oracle-\epsilon)_-} \Utility(s).
  \]
  That is, setting the price \emph{lower} than the oracle price has better payoff than passthrough pricing.
  Similar reasoning holds if we allow our sandwicher to set different prices for the first and second slot.

\end{example}

\subsection{Multiple batches} \label{multiple-batches}

Consider a repeated version of the backrun game where the same \(N\) players play \(M\in\mathbb{N}\sqcup\{\infty\}\) times, with \(x_0,\vec{r},\alpha,\phi,p_\oracle\) re-rolled each play.
Note that for realism in this case, we are forced to allow the liquidity curve to vary from turn to turn.
If we allow imperfect information about the liquidity curve, payoff analysis is substantially complicated.

The repeated game greatly opens up the strategy space, enabling new ways for arbitrageurs to coordinate to give liquidity traders bad prices.
For example, by analogy with the solution to the repeated prisoner's dilemma it is natural to search for Nash equilibria (in non-dominating strategies) in `price ring' strategies, where all arbitrageurs set prices to a level that deviates from the oracle price according to some scheme not depending on the player, unless another arbitrageur deviates first.

Furthermore, the possibility of using smart contracts to add new commitments to the multi stage game opens yet more possibilities for price rings to arise as `Stackelberg' equilibria \cite{hall2021game}.

\printbibliography

\end{document}